\providecommand{\U}[1]{\protect \rule{.1in}{.1in}}
\newtheorem{examples}{Examples}[section]
\newtheorem{lemma}{Lemma}[section]
\newtheorem{proposition}{Proposition}[section]
\newtheorem{definition}{Definition}[section]
\newtheorem{theorem}{Theorem}[section]
\newtheorem{remark}{Remark}[section]
\newtheorem{example}{Example}[section]
\newcommand{\Z}{{\mathbb Z}}
\newcommand{\ZZ}{\Z_{p^r}}
\newcommand{\A}{\mathcal{A}_{p}}
\newcommand{\F}{{\mathbb F}}
\newcommand{\C}{{\mathcal C}}
\newcommand{\vu}{{{\bf u}}}
\newcommand{\vv}{{{\bf v}}}
\newcommand{\vw}{{{\bf w}}}
	\newcommand{\@keywords}{}
	\providecommand{\keywords}[1]{\gdef\@keywords{#1}}
		\par\vspace{1ex}\noindent\textbf{Mots-clés : } \@keywords\par\vspace{1ex}%
\begin{document}

\title{On Catastrophicity  of Convolutional Codes and their encoders over $\ZZ$}

\author{Mohammed El Oued}
\address{Department of Mathematics, Higher Institute of Computer Science and Mathematics, University of Monastir, Tunisia}
\email{mohamed.eloued@isimm.rnu.tn}
\keywords{Finite rings, Convolutional codes, p-encoder, catastrophic encoders}
\maketitle

\begin{abstract} This paper investigates the existence of minimal $p$-encoders for convolutional codes over $\mathbb{Z}_{p^r}$, where $p$ is a prime. This addresses a conjecture from \cite{k}, which posits that every such code admits a minimal $p$-encoder, implying that all convolutional codes over $\mathbb{Z}_{p^r}$ are noncatastrophic  when input sequences are restricted to coefficients in $\{0, \dots, p-1\}$.  Our contributions include the introduction of a new polynomial invariant that characterizes free codes, which enables us to establish a necessary and sufficient condition for a free code over $\mathbb{Z}_{p^r}$ to be noncatastrophic in the usual sense (where input coefficients are from $\mathbb{Z}_{p^r}$). Based on these findings, we affirm the conjecture by providing a constructive method for obtaining a minimal $p$-encoder for any convolutional code over $\mathbb{Z}_{p^r}$.

\end{abstract}

\section{Introduction}
 Convolutional codes over rings   were introduced by  Massey and Mittelholzer \cite{ma}, with a specific focus on the ring  $\Z_M$. They demonstrated that linear
codes over $\Z_M$ are particularly suitable for phase modulation. Johannesson, Wan and Wittenmark \cite{z} further the structural analysis of these
codes, which exhibit significantly different behavior compared to codes over
fields. They introduced the concepts of right-invertible, noncatastrophic, and  basic encoders and analyzed the algebraic structure of these classes of encoders. Fagnani and Zampieri \cite{f}  extended this analysis   presenting a complete theory
of convolutional codes over $\Z_{p^r}$ in the usual case where the input sequence is
a free module. More recently, Kuijper and Pinto introduced  \cite{k}   the notions of $p$-encoder and minimal $p$-encoder for convolutional codes over $\ZZ$, thereby
defining several invariants for these codes, such as $p$-dimension and $p$-degree,
generalizing the existing notions for codes over fields.\\
In the case of codes over fields, any linear convolutional code admits a noncatastrophic encoder, i.e, a generator matrix $G(D)$ with the property that the output $\vu(D) G(D)$ is of finite weight only if the input $\vu(D)$ is also so. Thus,
catastrophicity in this context is a property of the encoder rather than the code
itself.  However, for codes over finite rings, the existence of noncatastrophic
encoders has been the subject of several studies. Massey and Mittelholzer \cite{z}
showed that a polynomial matrix $G(D)\in \Z[D]^{k\times n}$ is noncatastrophic if and
only if its projection $ \overline{G(D)}$ in $\Z_p[D]^{k\times n}$ is also noncatastrophic. Fagnani and
Zampieri \cite{f} proved that any free convolutional code over $\Z_{p^r}$ admits a noncatastrophic rational matrix. Nevertheless, there exist codes for which every
polynomial encoder is catastrophic, indicating that catastrophicity can be an
intrinsic property of the code.\\
The occurrence of catastrophic codes motivated the introduction of the concept of minimal $p$-encoder for convolutional codes over $\Z_p$ in \cite{k}. A minimal $p$-encoder $G(D)$ for a code $\C$ is defined as a noncatastrophic $p$-encoder whose rows form a reduced $p$-basis for $\C$. The authors conjectured that any convolutional code over $\ZZ$ admits a minimal $p$-encoder. In this work, we address and resolve this conjecture by  presenting  a constructive proof  of the existence of a  minimal $p$-encoder for any convolutional code over $\ZZ$.\\
The remainder of this paper is organized as follows:  next section is devoted to the exposition of some preliminaries on convolutional codes over $\ZZ$, specifically recalling  the notions of   $p$-basis, $p$-encoder and minimal $p$-encoder. Section 3 deals with full row rank polynomial matrices over a finite field, focusing on basic and catastrophic matrices. Section 4  studies the catastrophicity of polynomial matrices and free codes over $\ZZ$. Finally,  Section 5 is devoted to the existence and constriction   of minimal $p$-encoders for any convolutional code over $\ZZ$.

\section{Algebraic preliminaries }
\label{sec:2}
This section establishes the foundational algebraic concepts and notation used throughout this paper, particularly concerning rings of series and rational functions, and the definition of convolutional codes over $\ZZ$.\\
Let $\ZZ((D))$ be the ring of Laurent series over $\ZZ$. We denote by $\ZZ[D]$ the ring of polynomials with coefficients in $\ZZ$ and by $\ZZ(D)$ the  ring of rational functions defined in $\ZZ$. More precisely, $\ZZ(D)$  is the set of rational functions $\frac{p(D)}{q(D)}$, where $p(D),q(D)\in\ZZ[D]$ and the lowest degree coefficient of $q(D)$ is a unit.

This condition defines an equivalence class with the relation
$$
\frac{p(D)}{q(D)} \sim \frac{p_1(D)}{q_1(D)} \mbox{ if and only if } p(D)q_1(D) = p_1(D) q(D).
$$

\noindent It follows directly from these definitions that $\ZZ[D]$ is a subring of $\ZZ(D)$,  which is itself a subring of $\ZZ((D))$.
\subsection{Convolutional codes over a finite ring}
A significant portion of the existing literature on convolutional codes over rings utilizes the framework of semi-infinite Laurent series to represent code sequences \cite{fa,z,ha,ma}. Following this literature, we adopt this approach in our definition:
\noindent A linear convolutional code $\C$ of length $n$ over $\ZZ$ is defined as a $\ZZ((D))$-submodule of $\ZZ((D))^n$. Such a code can be characterized by the existence of a  polynomial generator matrix $G(D)\in \ZZ[D]^{k\times n}$ where the code $\C$ is the image of $G(D)$ over $\ZZ((D))$ :  $$\C=Im_{\ZZ((D))}G(D)=\{\vv(D)\in\ZZ((D))^n/\exists \vu(D)\in\ZZ((D))^k\ :\vv(D)=\vu(D) G(D)\}.$$
The matrix $G(D)$ is called generator matrix.
If moreover $G(D)$ is a full row rank matrix over $\ZZ((D))$, then   is called an {\bf encoder} of $\C$ and $\C$ is a free module, called {\bf free convolutional code}.\\
A generator matrix $G(D)\in \ZZ[D]^{k\times n}$ is defined to be {\bf noncatastrophic} if any infinite weight input $\vu(D)\in\ZZ((D))^k$ cannot produces a finite weight output, i.e.,
$$\mbox{wt}[\vu(D)G(D)]<+\infty\Longleftrightarrow \mbox{wt}[\vu(D)]<+\infty,$$
where the Hamming weight wt$[\vu(D)]$ of a sequence    $\vu(D)=(\vu_1(D),\ldots,\vu_k(D))\in\ZZ((D))^k$ is defined as the sum of the weights of its components : $$ \mbox{wt}[\vu(D)]=\displaystyle\sum_{i=1}^k \mbox{wt}[\vu_i(D)]$$ where $ \mbox{wt}[\vu_i(D)]$ represents the number of nonzero coefficients of the series $\vu_i(D)$.\\
  In contrast to convolutional codes over fields, catastrophicity is an inherent property of codes defined over $\ZZ$. Specifically, there exist convolutional codes over $\ZZ$
 
for which every polynomial encoder is catastrophic. Such codes are termed  catastrophic codes. For instance, the code over $\Z_4$  generated by the matrix $\left(
                                                             \begin{array}{cc}
                                                               1+D & 3+D \\
                                                             \end{array}
                                                           \right)$
is a catastrophic code.\\  It is evident that  generator matrices not full row rank are  catastrophic. Therefore, convolutional codes that are not free are catastrophic codes.

\subsection{$p$-basis and Reduced $p$-basis}

\noindent  This subsection introduces the concepts of $p$-basis and reduced $p$-basis, which are fundamental for understanding the structure of convolutional codes over
$\ZZ$. These concepts were initially developed for numerical vectors over $Z_{p^r}$ \cite{v}  and extended to polynomial vectors  \cite{m}.\\
Every  element $a$ in $\ZZ$  has a unique $p$-adic expansion of the form $a=a_0+pa_1+\ldots+p^{r-1}a_{p-1}$  where the coefficients $a_i$ are in the set $\{0,\dots,p-1\}$. We will denote this set  throughout this work as $\A$. 

\noindent  Given a set of polynomial vectors $\vv_1(D), \dots, \vv_k(D) \in \Z^n_{p^r}[D]$,
\begin{description}
  \item[$\bullet$]  A {\bf $\boldsymbol p$-linear combinaison} of $\vv_1(D),\ldots,\vv_k(D)$ is a vector $\displaystyle \sum_{j=1}^k a_j(D) \vv_j(D)$,
where $a_j(D)$ are polynomial in $\A [D]$ ($\A[D]$ stands for the set of polynomials with coefficients in $\A$).
\item [$\bullet$] The {\bf $\boldsymbol  p$-span} of a set of polynomial vectors  ${\vv_1(D), \dots, \vv_k(D)}$, denoted by  $p$-span$[\vv_1(D), \dots, \vv_k(D)]$, is the set of all $p$-linear combinations of these vectors. For  a $k\times n$ matrix $G(D)$ with rows  $g_1(D),\ldots,g_k(D)$, we define  $p\mbox{-}span[G(D)]$ as  $p\mbox{-}span[g_1(D),\dots,g_k(D)]$.
\end{description}
An ordered set of vectors $(\vv_1(D), \dots, \vv_k(D))$ in $\Z_{p^r}[D]^n$ is called a {\bf $\boldsymbol{p}$-generator sequence} if, for each $i$ from 1 to $k-1$, $p\vv_i(D)$ 
can be expressed as  a $p$-linear combination of $\vv_{i+1}(D), \dots, \vv_k(D)$,
and $p\vv_k(D)=0$.
As  proven in
\cite{m} that
if $(\vv_1(D), \dots, \vv_k(D))$ is a $p$-generator sequence in $\Z_{p^r}[D]^n$ it holds that
$$
p\mbox{-}span[\vv_1(D), \dots, \vv_k(D)]=span[\vv_1(D), \dots, \vv_k(D)],
$$
and consequently $p\mbox{-}span[\vv_1(D), \dots, \vv_k(D)]$ is a $\ZZ[D]$-submodule of $\Z_{p^r}[D]^n$.\\
 The vectors $\vv_1(D), \dots, \vv_k(D)$  in $\Z_{p^r}[D]^n$ are defined as {\bf $\boldsymbol{p}$-linearly independent} if the only $p$-linear combination of
$\vv_1(D), \dots, \vv_k(D)$ that yields the zero vector  is the trivial one.
An ordered set of vectors $(\vv_1(D), \dots, \vv_k(D))$ which is a $p$-linearly independent $p$-generator sequence of a submodule $M$ of $\mathbb Z_{p^r}[D]^n$ is said to be a {\bf $\boldsymbol{p}$-basis} of $M$.\\
 The row degree of a nonzero polynomial vector $\vv(D)\in\ZZ[D]^n$ is defined as the highest degree of its nonzero components in $\ZZ[D]$. It is denoted by $d^\circ(\vv(D))$. The coefficient  vector in $\ZZ^n$ of the term $D^{d^\circ(\vv(D))}$ in $\vv(D)$ is called the {\bf leading row coefficient} vector of $\vv(D)$ and denoted by $\vv^{lc}$, likewise, the leading row coefficient matrix of a polynomial matrix $G(D)$ is denoted by $G^{lrc}$.
 \\ A $p$-basis $(\vv_1(D),\dots,\vv_k(D))$ of a module $M$ is called  a {\bf reduced $\boldsymbol {p}$-basis} if  the set of their leading row coefficient vectors  $(\vv_1^{lc},\dots,\vv_k^{lc})$ is $p$-linearly independent in $\ZZ^n$.\\  A key result,  proven in \cite{m}, establishes that every $\ZZ[D]$-submodule of $\ZZ[D]^n$ admits a reduced $p$-basis.\\

\subsection{Minimal $p$-encoder}
This section introduces the concept of a minimal p-encoder for convolutional codes over the ring $\mathbb{Z}_{p^r}$. To establish this notion, we first recall the following fundamental definitions from \cite{k}.

\begin{definition}\normalfont$(\boldsymbol{{p}\mbox{-}encoder}).$
 Let $\C$ be a convolutional code of length $n$ over $\ZZ$. A polynomial matrix $G(D) \in \mathbb{Z}_{p^r}[D]^{k \times n}$ whose rows form a $p$-linearly independent $p$-generator sequence is called a {\bf $p$-encoder} if
 $$\C=\{\vv(D)\in\ZZ((D))^n/\exists \vu(D)\in\A((D))^k:\vv(D)=\vu(D) G(D)\},$$
 where $\A((D))$ stands for the set of (semi-infinite) Laurent series with coefficients in $\A$.
\end{definition}
\noindent The following definition adapts the notion of a catastrophic encoder to the context of $p$-encoders.
\begin{definition}\normalfont$(\boldsymbol{Noncatastrophic}).$
A $p$-encoder matrix $G(D)\in\ZZ[D]^{k\times n}$ is said to be {\bf noncatastrophic} if any infinite weight input in $\A((D))^k$ cannot produce a finite weight output. Formally, this means :
$$\forall  \vu(D)\in\A((D))^k,\; \mbox{wt}[\vu(D)G(D)]<+\infty\Longleftrightarrow \mbox{wt}[\vu(D)]<+\infty.$$
\end{definition}

\begin{definition}\normalfont$(\boldsymbol{Delay\mbox{-}free\; and\; Reduced}).$
 Let $G(D)\in\ZZ[D]^{k\times n}$ be a $p$-encoder for a convolutional code  $\C$ of length $n$. The $p$-encoder $G(D)$ is said to be delay-free if the rows of $G(0)$ are $p$-linearly independent and is said to be reduced if its rows form a reduced $p$-basis for $\C$.  
 \end{definition}
 
 \begin{definition}\normalfont$(\boldsymbol{Minimal\mbox{-}p\;encoder}).$
  Let $\C$  be a noncatastrophic convolutional code of length $n$ over $\ZZ$ . Let $G(D)$ be a delay-free noncatastrophic $p$-encoder for $\C$, such that the rows of
$G^{lrc}$  are $p$-linearly independent in $\ZZ$. Then $G(D)$ is called a minimal $p-$encoder for $\C$.
\end{definition}
 
\noindent The existence of minimal $p$-encoders for codes over $\mathbb{Z}_{p^r}$ was raised as an open problem in the paper \cite{k}. The authors therein formulated the conjecture that every convolutional code over $\mathbb{Z}_{p^r}$ admits a minimal $p$-encoder.
\section{Polynomial matrices over a finite field }
\noindent This section presents foundational theorems and results concerning polynomial matrices over a finite field, which are essential for the subsequent development of this work. Notably, we define the polynomial $\Delta(G)$ associated with any polynomial matrix $G(D)$, a polynomial that plays a fundamental role in our analysis.
\\ Let $\F$ be a finite field. We denote by $\F[D],\F(D)$ and $\F((D))$  the sets of polynomial, rational  and Laurent series, respectively,  with coefficients in $\F$.\\ An $(n,k)$ convolutional code $\C$ over $\F$ is a $k$-dimensional $\F((D))$-subspace of $\F((D))^n$. A generator matrix of $\C$ is a $k\times n$ matrix $G(D)$ over $\F(D)$ whose rowspace is $\C$. If the entries of $G(D)$ are polynomials, then $G(D)$ is called a polynomial generator  matrix. \\ We say that two  generator matrices $G_1(D),G_2(D)\in\F[D]^{k\times n}$ are equivalent if they generate the same code. This is equivalent to the existence of a $k\times k$ invertible matrix $M(D)$ over  $\F(D)$ such that  $G_1(D)=M(D)G_2(D)$.\\
All matrices considered in this section are assumed to have full row rank over $\F(D)$.

\noindent Let $G(D)$ be a $k\times n$  polynomial matrix over $F[D]$.  We define the degree of a row $g_i(D)$ of $G$ as the maximum degree of its components. We define the degree and the  internal degree of $G(D)$ as follows
$$\mbox{intdeg G(D)}= \mbox{maximum degree of G(D)'s }k\times k \;\mbox {minors}$$
$$deg\; G(D)= \mbox{sum of the row degrees of G(D)}.$$
\begin{definition}\cite{rj}\rm

  A $k\times n $ polynomial matrix $G(D)$ is called {\bf basic} if, $G(D)$ has a right $\F(D)$ inverse, i.e, there is a $n\times k$ polynomial matrix $H(D)$ such that $$G(D)H(D)=I_k.$$
\end{definition}

\noindent The following  theorem characterizes basic  matrices.
\begin{theorem}\cite{rj}\label{t1}\rm

\item A $k\times n $ polynomial matrix $G(D)$ is basic if and only if the $\gcd$ of the $k\times k$ minors of $G(D)$ is 1.
\end{theorem}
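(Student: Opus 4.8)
The plan is to prove both implications: the direction ``basic $\Rightarrow$ gcd is $1$'' is elementary via the Cauchy--Binet formula, while the converse rests on the Smith normal form over the principal ideal domain $\F[D]$. For the easy direction, suppose $G(D)$ is basic, so that there is a polynomial matrix $H(D)\in\F[D]^{n\times k}$ with $G(D)H(D)=I_k$. Taking determinants and applying Cauchy--Binet gives
$$1=\det\bigl(G(D)H(D)\bigr)=\sum_{S}\det\bigl(G_S(D)\bigr)\,\det\bigl(H_S(D)\bigr),$$
where $S$ ranges over the $k$-element subsets of $\{1,\dots,n\}$, $G_S(D)$ is the $k\times k$ submatrix of $G(D)$ on the columns indexed by $S$, and $H_S(D)$ the $k\times k$ submatrix of $H(D)$ on the rows indexed by $S$. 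If $d(D)$ denotes the gcd of the $k\times k$ minors of $G(D)$, then $d(D)$ divides every $\det(G_S(D))$ and hence divides the whole sum, so $d(D)\mid 1$; thus $d(D)$ is a unit.

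For the converse, assume the gcd of the $k\times k$ minors of $G(D)$ is $1$. Since $\F[D]$ is a Euclidean domain, hence a PID, $G(D)$ admits a Smith normal form: there are unimodular matrices $U(D)\in\F[D]^{k\times k}$ and $V(D)\in\F[D]^{n\times n}$ and invariant factors $\gamma_1(D)\mid\gamma_2(D)\mid\cdots\mid\gamma_k(D)$ with
$$U(D)\,G(D)\,V(D)=\bigl[\,\Gamma(D)\mid 0\,\bigr],\qquad \Gamma(D)=\mathrm{diag}\bigl(\gamma_1(D),\dots,\gamma_k(D)\bigr).$$
The $k$-th determinantal divisor, that is the gcd of the $k\times k$ minors, is invariant under multiplication by unimodular matrices and equals $\gamma_1(D)\cdots\gamma_k(D)$ up to a unit. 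By hypothesis this product is a unit, and since each $\gamma_i(D)$ divides the next, every $\gamma_i(D)$ must itself be a unit; after normalizing we may take $\Gamma(D)=I_k$.

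It then remains to exhibit a polynomial right inverse explicitly. From $U(D)G(D)V(D)=[\,I_k\mid 0\,]$ we get $G(D)=U(D)^{-1}[\,I_k\mid 0\,]V(D)^{-1}$, and because $U(D),V(D)$ are unimodular their inverses are again polynomial. Setting
$$H(D)=V(D)\begin{pmatrix}I_k\\0\end{pmatrix}U(D)\in\F[D]^{n\times k},$$
a direct computation gives $G(D)H(D)=U(D)^{-1}[\,I_k\mid 0\,]\begin{pmatrix}I_k\\0\end{pmatrix}U(D)=U(D)^{-1}U(D)=I_k$, so $G(D)$ is basic.

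The main obstacle is the converse, and specifically the step linking the hypothesis on the minors to the structure of $\Gamma(D)$: one must invoke the identification of the $k$-th determinantal divisor with the product $\gamma_1(D)\cdots\gamma_k(D)$ together with its invariance under unimodular equivalence, and then use the divisibility chain to force each invariant factor to be a unit. Once the $\gamma_i(D)$ are known to be units, constructing the polynomial right inverse is routine bookkeeping.
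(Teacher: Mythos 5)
Your proof is correct: the Cauchy--Binet argument for the forward direction and the Smith-normal-form/invariant-factor argument for the converse are both sound, and the construction of the polynomial right inverse $H(D)=V(D)\begin{pmatrix}I_k\\0\end{pmatrix}U(D)$ checks out. The paper itself states this theorem as a citation to McEliece's handbook chapter and gives no proof, but your argument is essentially the standard one found in that reference (via the invariant-factor theorem), so there is nothing substantive to contrast.
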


\begin{definition}\rm
Let $G(D)$ be a $k\times n$ polynomial matrix. We define the polynomial  $\Delta(G)$ as the as the greatest common divisor $(\gcd)$ of the nonzero $k\times k$ minors of $G(D)$.
\end{definition}
\noindent If $G(D)\in \F[D]^{k\times n}$ has  full row rank, there is at least one nonzero minor of $G(D)$ . As a direct consequence of Theorem \ref{t1},  $\Delta(G)$ characterizes basic matrices : $G(D)$ is basic if and only if $\Delta(G)=1$.

\begin{proposition}\label{p2}\rm
Let $G(D)\in \F[D]^{k\times n}$  and $M(D)\in\F(D)^{k\times k}$ such that $M(D)G(D)\in\F[D]^{k\times n}$. Then $$\Delta(MG)=\lambda \Delta(G)\det M(D),$$
where $\lambda\in\F$  is such that $\lambda\det M(D)$ is a monic polynomial.
\end{proposition}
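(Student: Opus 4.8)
The plan is to reduce the statement to a single determinant identity relating the maximal minors of $MG$ to those of $G$, and then to compare greatest common divisors. First I would fix notation: for a $k$-subset $J\subseteq\{1,\dots,n\}$, write $G_J$ for the $k\times k$ submatrix of $G(D)$ formed by the columns indexed by $J$, and likewise $(MG)_J$. The crucial observation is that selecting a set of columns commutes with left multiplication, so $(MG)_J = M(D)\,G_J$. Taking determinants and using multiplicativity over the field $\F(D)$ gives, for every $J$,
$$\det\big((MG)_J\big) = \det M(D)\cdot \det G_J.$$
Since $M(D)$ is invertible over $\F(D)$ we have $\det M(D)\neq 0$, so $\det\big((MG)_J\big)=0$ if and only if $\det G_J=0$; hence the nonzero maximal minors of $MG$ and of $G$ are indexed by exactly the same sets $J$.

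Next I would pass to greatest common divisors. By definition $\Delta(MG)$ is the monic $\gcd$ of the polynomials $\det\big((MG)_J\big)=\det M(D)\cdot\det G_J$. The only delicate point is that $\det M(D)$ is a priori merely a rational function, so it cannot be pulled out of a $\gcd$ taken in $\F[D]$ without justification. To handle this I would write $\det M(D)=a(D)/b(D)$ in lowest terms with $b(D)$ monic. For each relevant $J$ the product $(a/b)\det G_J$ is a polynomial (it is a minor of the polynomial matrix $MG$), and since $\gcd(a,b)=1$ this forces $b(D)\mid\det G_J$ for all such $J$, whence $b(D)\mid\Delta(G)$. Writing $\det G_J=b(D)\,\tilde g_J(D)$ with $\tilde g_J\in\F[D]$ gives $\det\big((MG)_J\big)=a(D)\,\tilde g_J(D)$, so that, setting $\delta(D)=\gcd_J\tilde g_J(D)$, one obtains
$$\Delta(G)=b(D)\,\delta(D),\qquad \Delta(MG)=a(D)\,\delta(D)$$
up to nonzero scalars in $\F$. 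Dividing the two identities eliminates the common factor $\delta(D)$ and shows that $\Delta(MG)$ and $\Delta(G)\det M(D)$ agree up to a multiplicative constant in $\F$.

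Finally I would account for the normalization convention. Since $\Delta(G)$ and $\Delta(MG)$ are taken monic, the constant for which $\Delta(MG)=\lambda\,\Delta(G)\det M(D)$ holds is exactly the reciprocal of the leading coefficient of (the numerator of) $\det M(D)$, which is precisely the scalar $\lambda\in\F$ making $\lambda\det M(D)$ monic; this yields the stated equality. The main obstacle, and the only step requiring genuine care, is this $\gcd$ extraction: because $\det M(D)$ need not be a polynomial, one cannot appeal directly to the identity $\gcd_J(c\cdot f_J)=c\cdot\gcd_J f_J$ (valid only for a polynomial multiplier $c$), and must instead argue through the lowest-terms factorization $\det M(D)=a(D)/b(D)$ together with the divisibility $b(D)\mid\Delta(G)$ as above.
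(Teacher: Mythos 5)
Your proof is correct and follows the same route as the paper's: the paper's entire argument is the observation that the $k\times k$ minors of $M(D)G(D)$ are those of $G(D)$ multiplied by $\det M(D)$, from which "the result holds immediately." You supply the one detail the paper leaves implicit — namely that $\det M(D)$ may be a genuine rational function, so extracting it from the $\gcd$ requires the lowest-terms factorization $\det M=a/b$ and the divisibility $b\mid\Delta(G)$ — which is a worthwhile clarification but not a different approach.
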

\begin{proof}\rm
Clearly the $k\times k$ minors of $M(D)G(D)$ are the $k\times k$ minors of $G(D)$ multiplied  by $\det M$ and the result holds immediately.
\end{proof}

\subsection{Catastrophic generator matrices}
\begin{definition}\normalfont A  $k\times n$ polynomial generator matrix $G(D)$ over $\F$ is said to be {\bf catastrophic} if there exists an infinite-weight vector $\vu(D)\in\F((D))^k$
such that the corresponding codeword $\vu(D)G(D)$ has finite weight.
\end{definition}
\noindent The following theorem, provided in \cite{jl},  characterizes   whether    a polynomial matrix $G(D)$  is noncatastrophic  based on its characteristic polynomial $\Delta(G)$.
\begin{theorem}\cite{jl}\label{j}\normalfont
A polynomial  matrix $G(D)$ is noncatastrophic if and only if  $\frac{1}{\Delta(G)}$ is of finite weight, i.e, $\Delta(G)$ is a power of $D$.
\end{theorem}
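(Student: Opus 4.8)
The plan is to reduce the problem, via the Smith normal form of $G(D)$ over the principal ideal domain $\F[D]$, to a diagonal matrix, for which catastrophicity can be read off directly. First I would record the elementary observation that a Laurent series $f(D)\in\F((D))$ has finite weight precisely when it is a Laurent polynomial, and that for $g(D)\in\F[D]$ the series $1/g(D)$ has finite weight if and only if $g(D)=cD^m$ for some constant $c$ and integer $m\ge 0$; indeed, $1/g$ having finite weight would make $g$ a unit in the ring $\F[D,D^{-1}]$ of Laurent polynomials, whose units are exactly the monomials. This gives the equivalence between the two formulations ``$1/\Delta(G)$ is of finite weight'' and ``$\Delta(G)$ is a power of $D$''.

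Next I would invoke the Smith normal form: since $G(D)$ has full row rank over $\F(D)$, there exist unimodular matrices $U(D)\in\F[D]^{k\times k}$ and $V(D)\in\F[D]^{n\times n}$ (polynomial with polynomial inverses) such that $G(D)=U(D)\,[\Sigma(D)\mid 0]\,V(D)$, where $\Sigma(D)=\mathrm{diag}(\gamma_1(D),\dots,\gamma_k(D))$ collects the invariant factors. Because $U,V$ and their inverses are polynomial, left multiplication by $U^{\pm1}$ and right multiplication by $V^{\pm1}$ preserve finite weight of vectors in both directions; writing $\vw(D)=\vu(D)U(D)$ one checks that $\vu(D)G(D)$ has finite weight iff $\vw(D)[\Sigma\mid0]$ has finite weight, while $\vu$ has finite weight iff $\vw$ does. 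Hence $G(D)$ is noncatastrophic if and only if $[\Sigma\mid0]$ is.

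The diagonal case is then transparent. For $\vw(D)\in\F((D))^k$ the output $\vw(D)[\Sigma\mid0]$ equals $(\gamma_1 w_1,\dots,\gamma_k w_k,0,\dots,0)$, so it has finite weight iff each $\gamma_i w_i$ does. If every $\gamma_i$ is a monomial $c_iD^{m_i}$, then $\gamma_i w_i$ of finite weight forces $w_i=(\gamma_iw_i)/\gamma_i$ to have finite weight, so no infinite-weight input yields a finite-weight output and $[\Sigma\mid0]$ is noncatastrophic. Conversely, if some $\gamma_i$ is not a monomial, taking $w_i=1/\gamma_i$ (infinite weight by the opening observation) and the remaining coordinates zero produces the finite-weight output $(\dots,\gamma_iw_i,\dots)=(\dots,1,\dots)$, witnessing catastrophicity.

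Finally I would tie the invariant factors back to $\Delta(G)$. The only nonzero $k\times k$ minor of $[\Sigma\mid0]$ is $\gamma_1\cdots\gamma_k$, so $\Delta([\Sigma\mid0])=\gamma_1\cdots\gamma_k$ up to a unit; and since $\det U(D)$ is a nonzero constant, Proposition \ref{p2} (together with the invariance of the gcd of maximal minors under the right unimodular factor $V$, via Cauchy--Binet) gives $\Delta(G)=\gamma_1\cdots\gamma_k$ up to a unit. As $\F[D]$ is a unique factorization domain, this product is a power of $D$ if and only if each $\gamma_i$ is, which by the diagonal analysis happens exactly when $[\Sigma\mid0]$, equivalently $G(D)$, is noncatastrophic. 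The main technical point to handle with care is the Smith-form reduction: verifying that the unimodular factors preserve weight in both directions, and that the gcd of the maximal minors really is the product of the invariant factors (the top determinantal divisor), which is what lets $\Delta(G)$ control catastrophicity.
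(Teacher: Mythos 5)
Your proof is correct. It is worth noting that the paper does not actually prove Theorem \ref{j} --- the result is quoted from Massey and Sain --- but the argument the paper relies on elsewhere (the proof of Lemma \ref{lx}, attributed to McEliece) follows a genuinely different route: factor $\Delta(G)$ into irreducibles, evaluate $G(D)$ at a root of each factor in an extension field to detect a rank drop, lift the resulting elementary row operation back to $\F[D]$ to extract a row divisible by that factor, and iterate until a basic (hence right-invertible, hence noncatastrophic) matrix remains. Your route through the Smith normal form over the PID $\F[D]$ replaces that iteration by a single structural decomposition: the unimodular factors preserve finite weight in both directions, the diagonal case is decided coordinatewise, and the identification of $\Delta(G)$ with the top determinantal divisor $\gamma_1\cdots\gamma_k$ (up to a unit, via Cauchy--Binet on the right factor and Proposition \ref{p2} on the left) closes the loop, with unique factorization reducing ``the product is a power of $D$'' to ``each $\gamma_i$ is.'' Your argument is shorter and avoids field extensions entirely; the McEliece-style argument buys something the rest of the paper actually needs, namely an explicit matrix $M(D)$ with constant determinant and a distinguished row divisible by $\Delta(G)$, which is reused in Theorem \ref{px} and Proposition \ref{cr}. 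One small point you could make explicit: the forward implication (finite-weight input gives finite-weight output) is automatic because $G(D)$ is polynomial, so the entire content of the theorem is the converse, which is exactly what your diagonal analysis settles.
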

\noindent The following lemma is crucial for  understanding the sequel of this work, and its proof is based on a  proof of the Theorem \ref{j}  given by R.J. McEliece in \cite{rj}. 
\begin{lemma}\label{lx}\rm
For any matrix $G(D)\in\F[D]^{k\times n}$ there exists a matrix $M(D)\in\F(D)^{k\times k}$  where $\det M(D)$ is a non-zero constant in $\F$ and such that  $M(D)G(D)$ is a polynomial matrix which  admits a row whose entries are divisible by $\Delta(G)$. Moreover,  the matrix obtained from $M(D)G(D)$ by dividing this row by $\Delta(G)$ is basic.
\end{lemma}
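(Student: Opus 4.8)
The plan is to reduce the statement to a \emph{basic factorization} of $G(D)$ and then finish with a single diagonal rescaling. First I would establish that every full row rank $G(D)\in\F[D]^{k\times n}$ can be written as $G(D)=A(D)\,\widehat{G}(D)$ with $A(D)\in\F[D]^{k\times k}$ nonsingular and $\widehat{G}(D)\in\F[D]^{k\times n}$ basic. Granting this, set $\Delta=\Delta(G)$ and $D_\Delta=\mathrm{diag}(1,\dots,1,\Delta)$, and take $M(D)=D_\Delta\,A(D)^{-1}\in\F(D)^{k\times k}$. Then all three required properties fall out at once: $M(D)G(D)=D_\Delta A(D)^{-1}A(D)\widehat{G}(D)=D_\Delta\widehat{G}(D)$ is polynomial; its last row equals $\Delta$ times the last row of $\widehat{G}(D)$, hence has entries divisible by $\Delta$; and undoing this scaling, i.e.\ dividing the last row by $\Delta$, returns $\widehat{G}(D)$, which is basic.

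The determinant condition on $M(D)$ would follow from Proposition \ref{p2}. Since $\widehat{G}(D)$ is basic, Theorem \ref{t1} gives $\Delta(\widehat{G})=1$, so applying Proposition \ref{p2} to $G(D)=A(D)\widehat{G}(D)$ yields $\Delta=\Delta(G)=\lambda\,\Delta(\widehat{G})\det A(D)=\lambda\det A(D)$ for some nonzero constant $\lambda\in\F$. Hence $\det A(D)$ is a nonzero constant multiple of $\Delta$, and $\det M(D)=\det D_\Delta\cdot\det A(D)^{-1}=\Delta/\det A(D)=\lambda$ is a nonzero constant of $\F$, exactly as required.

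It remains to produce the basic factorization, which is the heart of the argument and where I would follow McEliece's construction. Over the principal ideal domain $\F[D]$ the matrix $G(D)$ admits a Smith normal form $U(D)G(D)V(D)=\big[\,\mathrm{diag}(d_1,\dots,d_k)\mid 0\,\big]$, with $U(D),V(D)$ unimodular and invariant factors $d_1\mid\cdots\mid d_k$ whose product equals $\Delta$ up to a unit. Writing the right-hand side as $\mathrm{diag}(d_1,\dots,d_k)\,[\,I_k\mid 0\,]$ and moving $U(D),V(D)$ back across, I would set $A(D)=U(D)^{-1}\mathrm{diag}(d_1,\dots,d_k)$ and $\widehat{G}(D)=[\,I_k\mid 0\,]V(D)^{-1}$, the leading $k$ rows of $V(D)^{-1}$. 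Both are polynomial, $A(D)$ is nonsingular, and $\widehat{G}(D)$ is basic because $H(D)=V(D)\,[\,I_k\mid 0\,]^{T}$ (the first $k$ columns of $V(D)$) is a polynomial right inverse: $\widehat{G}(D)H(D)=[\,I_k\mid 0\,]V(D)^{-1}V(D)[\,I_k\mid 0\,]^{T}=I_k$.

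The main obstacle is precisely this basic factorization step, and within it the claim that the leading $k$ rows of a unimodular matrix always form a basic matrix; everything afterward is bookkeeping with Proposition \ref{p2} and the determinant of the diagonal rescaling. One should also keep in mind that $\Delta$ matches the product $d_1\cdots d_k$ only up to a unit of $\F[D]$, i.e.\ a nonzero scalar; since the conclusion only asks that $\det M(D)$ be a nonzero constant, this ambiguity is harmless and is absorbed into $\lambda$.
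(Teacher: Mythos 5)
Your proof is correct, but it takes a genuinely different route from the paper's. The paper follows McEliece's iterative construction: it factors $\Delta(G)=P_1(D)\cdots P_s(D)$ into irreducibles, evaluates $G$ at a root $\alpha_i$ of each $P_i$ in an extension field, uses the resulting rank drop to find a unimodular $M_i(D)$ making one row divisible by $P_i(D)$, divides that row out, and repeats until the last row has absorbed all of $\Delta(G)$. You instead obtain everything in one stroke from a basic factorization $G=A\widehat{G}$ via the Smith normal form over the PID $\F[D]$, and then conjugate by the diagonal rescaling $D_\Delta$; your verification that $\widehat{G}=[\,I_k\mid 0\,]V^{-1}$ is basic (explicit polynomial right inverse from the first $k$ columns of $V$), that $d_1\cdots d_k$ equals $\Delta(G)$ up to a unit, and that $\det M$ is a nonzero constant via Proposition \ref{p2} are all sound. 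What your approach buys is brevity and a cleaner algebraic packaging with no excursions into extension fields. What the paper's approach buys is a factor-by-factor division procedure built from explicit unimodular row operations, which is precisely the template the author reuses later: Remark \ref{r11}, Theorem \ref{px}, Theorem \ref{lmx} and Proposition \ref{cr} all rely on peeling off one irreducible divisor of $\Delta_p(G)$ at a time over $\ZZ$, where a Smith-form argument is not available in the same form. So your proof establishes the lemma, but be aware that it does not by itself deliver the intermediate one-factor-at-a-time statements the rest of the paper leans on.
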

\begin{proof}\rm
Let $\Delta(G)$ be factored into irreducibles polynomials over $\F[D]$ as $\Delta(G)=P_1(D)\dots P_s(D)$, where $P_i(D)$ are monic irreducible polynomials. Let  $\alpha_i$ be a zero of $P_i(D)$ in some extension field of $\F$.\\ Consider the matrix $G(\alpha_1)$. Each of the $k\times k$ minors of $G(\alpha_1)$ is zero, so rank $G(\alpha_1)<k$. Thus, there exists a $k\times k$ elementary matrix $M_1(\alpha_1)$ where the matrix $G_1(\alpha_1)=M_1(\alpha_1)G(\alpha_1)$ has one row  all zeros. We  assume that is the last row.  By replacing   $\alpha_1$  with the indeterminate $D$, we obtain $$G_1(D)=M_1(D)G(D),$$
where $M_1(D)$ is a unimodular $k\times k$ matrix and  the last  row of  $M_1(D)G(D)$ is divisible by $P_1(D)$. Let us  define the matrix $G'_1(D)$  obtained from $G_1(D)$ by dividing its  last  row by $P_1(D)$. This is equivalent to  multiplying  $G_1$ on the left by the diagonal matrix $$N_1=
                                                                                                                  \begin{pmatrix}
                                                                                                                    1 &  &  &  \\
                                                                                                                     & \ddots &  &  \\
                                                                                                                     &  & 1&  \\
                                                                                                                     &  &  & \frac{1}{P_1(D)} 
                                                                                                                  \end{pmatrix}
                                                                                                                .$$
We have $\Delta(G'_1)=\frac{\Delta(G)}{P_1(D)}=P_2(D)\dots P_s(D)$. We repeat the same procedure with $G'_1(D)$ and the next zero $\alpha_2$ of $\Delta(G)$.  Continuing this process, we obtain a sequence of unimodular matrices $M_1(D),\dots,M_s(D)$, for $i=1,\dots,s-1$ a sequence of diagonal matrices $N_i(D)$ with 1 in the $(k-1)$ first diagonal entries and $\frac{1}{P_i(D)}$ in the k-th diagonal. For $i=s$ we consider  $N_s(D)=
   \begin{pmatrix}
    1 &  &  &  \\
     & \ddots &  &  \\
     &  & 1&  \\
      &  &  & \frac{\Delta(G)}{P_s(D)} \\
     \end{pmatrix}
      .$ Let $M(D)=N_s(D)M_s(D)\dots N_1(D)M_1(D)$. Clearly the entries of the last row of $M(D)G(D)$ are divisible by $\Delta(G)$. Moreover, $\det M=cte\neq0$ which implies that $\Delta(G)=\Delta(MG)$.\\
 Finally, it is clear that the matrix obtained by dividing the last row of $M(D)G(D)$ by $\Delta(G)$ is basic.
\end{proof}

\noindent To illustrate the results of the last lemma, we present the  following examples.
\begin{examples}\normalfont  
\begin{enumerate}
\item Consider the matrix  $G(D)$ over  the field $\F_2$:
\begin{equation}\label{e}
G(D)=\left(
            \begin{array}{cccc}
              1+D & 0 & 1 & D \\
              D & 1+D+D^2 & D^2 & 1 \\
            \end{array}
          \right).
\end{equation}
By a simple calculation we have $\Delta(G)=1+D+D^2$. The polynomial  $1+D+D^2$  has a root $\alpha$ in $GF(2^2)$. Substituting $\alpha$ for $D$ in $G(D)$, we obtain $$G(\alpha)=\left(
            \begin{array}{cccc}
              1+\alpha & 0 & 1 & \alpha \\
              \alpha & 0 & \alpha^2 & 1 \\
            \end{array}
          \right).$$
          We note that $G(\alpha)$  has rank less than 2, and indeed if we multiply  $M(\alpha)=\left(
                                                                            \begin{array}{cc}
                                                                              1 & 0 \\
                                                                              1+\alpha & 1 \\
                                                                            \end{array}
                                                                          \right)$ by $G(\alpha)$ we obtain
\begin{equation}\label{e'}
G'(\alpha)=M(\alpha)G(\alpha)=\left(
                                \begin{array}{cccc}
                                  1+\alpha & 0 & 1 & \alpha \\
                                  0 & 0 & 0 & 0 \\
                                \end{array}
                              \right).
\end{equation}
Substituting $D$ for $\alpha$, in (\ref{e'}) we find

\begin{eqnarray*}
G'(D)&=&\left(
          \begin{array}{cc}
            1 & 0 \\
            1+D & 1 \\
          \end{array}
        \right)G(D)\\
&=&
\left(
         \begin{array}{cccc}
          1+D & 0 & 1 & D \\
           1+D+D^2 & 1+D+D^2 & 1+D+D^2 & 1+D+D^2  \\
            \end{array}
            \right).
\end{eqnarray*}
The last row of $G'(D)$ is already divisible by $\Delta(G)$ and the matrix obtained by dividing this row by $\Delta(G)$ is basic.

\item  The following example follows the steps seen in the proof of the last lemma. We consider the generator matrix over $\F_3$:
\begin{equation}\label{n1}
G(D)=\left(
\begin{array}{cccc}
2D & D & 2+D & 1 \\
1+D & 2+D & 1+D & 1 \\
1 & 0 & 1 & 2 \\
\end{array}
\right).
\end{equation}
If we denote the $3\times 3$ minor of $G(D)$ corresponding to columns $i,j$ and $k$ of $G(D)$ by $\Delta_{ijk}$, we have
                $$\Delta_{123}=2(1+D)(2+D),\;\;\Delta_{124}=2(1+D)(2+D),\;\;\Delta_{134}=2(1+D)(2+D),\;\;\Delta_{234}=0,$$
so that $\Delta(G)=(2+D)(1+D)$. Let us write $\Delta(G)=P_1(D)P_2(D)$,  $P_1(D)=2+D$ has the root $\alpha_1=1$ and $P_2(D)=1+D$ has the root $\alpha_2=2$. We begin by looking for a unimodular matrix $M_1(D)\in \F_3[D]^{3\times 3}$ such that one row, the third for example, of $M(D)G(D)$ will be divisible by $P_1(D)$. Substituting $D$ by 1 in equation (\ref{n1}), we obtain
$$G(1)=\left(
         \begin{array}{cccc}
           2 & 1 & 0 & 1 \\
           2 & 0 & 2 & 1 \\
           1 & 0 & 1 & 2 \\
         \end{array}
       \right).$$
 If we multiply the second row of $G(1)$ by $\alpha_1$ and add it to the third row, or alternatively multiply $G(1)$ by
$M_1(\alpha_1)=\left(
                 \begin{array}{ccc}
                   1 & 0 & 0 \\
                   0 & 1 & 0 \\
                   0 & 1 & 1 \\
                 \end{array}
               \right)$, we obtain
\begin{equation}\label{n2}
G_1(1)=M_1(1)G(1)=\left(
                                         \begin{array}{cccc}
                                           2 & 1 & 0 & 1 \\
                                           2 & 0 & 2 & 1 \\
                                           0 & 0 & 0 & 0 \\
                                         \end{array}
                                       \right).
\end{equation}
Substituting $D$ for $1$, in (\ref{n2}) and defining $G_1(D)=M_1(D)G(D)$, we obtain $$
G_1(D)=\left(
\begin{array}{ccc}
1 & 0 & 0 \\
0 & 1 & 0 \\
0 & D & 1 \\
\end{array}
\right)G(D)=\left(
\begin{array}{cccc}
2D & D & 2+D & 1 \\
1+D & 2+D & 1+D & 1 \\
(2+D)^2 & D(2+D) & (2+D)^2 & 2+D \\
\end{array}
\right).$$
Each entry in the last row of $G_1(D)$ is divisible by $P_1(D)$. The next step is to divide the third row of $G_1(D)$ by $P_1(D)$, or alternatively multiply 
$N_1(D)=\left(
          \begin{array}{ccc}
            1 & 0 & 0 \\
            0 & 1 & 0 \\
            0 & 0 & \frac{1}{2+D} \\
          \end{array}
        \right)$ by $G_1(D)$ and  obtain $$G_2(D)=N_1(D)G_1(D)=\left(
\begin{array}{cccc}
2D & D & 2+D & 1 \\
1+D & 2+D & 1+D & 1 \\
2+D & D & 2+D & 1 \\
\end{array}
\right).$$
As $M_1(D)$ is a unimodular matrix, we have $$\Delta(G_2)=\frac{\Delta(G)}{P_1(D)}=P_2(D)=1+D.$$ We repeat the same procedure with the matrix $G_2(D)$ and the polynomial $P_2(D)$. We obtain the unimodular matrix $M_2(D)=\left(
                                                    \begin{array}{ccc}
                                                      1 & 0 & 0 \\
                                                      0 & 1 & 0 \\
                                                      D & 0 & 1 \\
                                                    \end{array}
                                                  \right)$ where the third row of the matrix

$$M_2(D)G_2(D)=\left(
\begin{array}{cccc}
2D & D & 2+D & 1 \\
1+D & 2+D & 1+D & 1 \\
2(1+D)^2 & D(1+D) & (2+D)(1+D) & 1+D\\
\end{array}
\right)$$ is divisible by $P_2(D)$. Finally, as seen in the proof of the lemma, we take the matrix
$$N_2(D)=\left(
\begin{array}{ccc}
1 & 0 & 0 \\
0 & 1 & 0 \\
0 & 0 & \frac{\Delta(G)}{P_2(D)} \\
\end{array}
\right)=
\begin{pmatrix}
1 & 0 & 0\\
0 & 1 & 0\\
0 & 0 & 2+D
\end{pmatrix}
$$ and consider the matrix $$M(D)=N_2(D)M_2(D)N_1(D)M_1(D)=\left(
                                                                   \begin{array}{ccc}
                                                                     1 & 0 & 0 \\
                                                                     0 & 1 & 0 \\
                                                                     D(2+D) & D & 1 \\
                                                                   \end{array}
                                                                 \right).$$
Clearly $\det M(D)=1$ and we have
$$M(D)G(D)=
\begin{pmatrix}
2D & D & 2+D & 1 \\
1+D & 2+D & 1+D & 1 \\
2(1+D)^2(2+D) & D(1+D)(2+D) & (1+D)(2+D)^2 & (1+D)(2+D) \\
\end{pmatrix},$$
where the last row is divisible by $\Delta(MG)=\Delta(G)$ and the matrix obtained by dividing this row by $\Delta(G)$ is basic.
\end{enumerate}
\end{examples}

\begin{remark}\label{r11}\rm
Let $G(D)$ be a polynomial generator matrix over $\F$.
\begin{enumerate}
\item
 For any irreducible polynomial $P(D)$ that divides $\Delta(G)$,  there exists a unimodular matrix $M(D)$ such that the product $M(D)G(D)$ has one row  divisible  by $P(D)$.
\item Conversely, any polynomial devising a row of $G(D)$ is a divisor of $\Delta(G)$.
\item As a consequence, if $G(D)$  has a row  divisible by $\Delta(G)$,  then no  non-constant polynomial can divide  another row of $G(D)$.
\end{enumerate}
\end{remark}

\noindent The next theorem  characterizes  the infinite weight input sequences that produce a finite weight output when encoded by a catastrophic polynomial generator matrix. 
\begin{theorem}\label{px}\normalfont
Let  $G(D)$ be a $k\times n$ catastrophic  polynomial generator matrix,  and let $\vu(D)\in \F((D))^k$ be an infinite weight series. Then, the weight of the codeword $\vu(D)G(D)$ is finite if and only if  for any   invertible matrix $M(D)\in \F(D)^{k\times k}$, such that   $\det M(D)=c D^\alpha$, $c\in\F^*$  and $\alpha$ is a positive integer,    and  where the product $M(D)G(D)$ has  a j-th row (for some  $j\in\{1,\dots,k\}$) whose entries are divisible by $\Delta(G)$, the series  $\vu(D)M^{-1}(D)=(\vw_1(D),\dots,\vw_k(D))$  satisfies  the following  properties:

\begin{description}
  \item[i)] $\mbox{wt}[\vw_i(D)]<+\infty$, for $i=1,\dots,k$ where $i\neq j$.

  \item[ii)]  The component  $\vw_j(D)$ is of infinite weight and has the form $\frac{P(D)}{D^lQ(D)}$, where   $P(D),Q(D)\in\F[D]$ are coprime polynomials,  $Q(D)$ is a divisor of $\Delta(G)$,  and $l$ is a positive integer.
\end{description}

\end{theorem}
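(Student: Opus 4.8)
The plan is to reduce the whole statement to a single noncatastrophic "companion" matrix obtained from $M(D)G(D)$ by stripping the factor $\Delta(G)$ off the distinguished row. Fix $M(D)$ as in the statement, so $\det M(D)=cD^\alpha$ and the $j$-th row of $M(D)G(D)$ is divisible by $\Delta(G)$. Write
$$M(D)G(D)=\Lambda(D)\,G^*(D),\qquad \Lambda(D)=\mathrm{diag}(1,\dots,1,\Delta(G),1,\dots,1),$$
with $\Delta(G)$ in the $j$-th diagonal slot and $G^*(D)$ the polynomial matrix gotten by dividing the $j$-th row of $M(D)G(D)$ by $\Delta(G)$. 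Applying Proposition \ref{p2} twice gives $\Delta(MG)=\lambda\,\Delta(G)\det M(D)$ and $\Delta(MG)=\lambda'\,\Delta(G^*)\det\Lambda(D)=\lambda'\,\Delta(G^*)\Delta(G)$, whence $\Delta(G^*)$ is $D^\alpha$ up to a constant, i.e. a power of $D$. By Theorem \ref{j}, $G^*(D)$ is therefore noncatastrophic.

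Next I would transport the input through this factorization. Put $\vw(D)=\vu(D)M^{-1}(D)$ and $\vy(D)=\vw(D)\Lambda(D)$, so $\vy_i(D)=\vw_i(D)$ for $i\neq j$ and $\vy_j(D)=\Delta(G)\vw_j(D)$, and note
$$\vu(D)G(D)=\vw(D)\,M(D)G(D)=\vw(D)\Lambda(D)G^*(D)=\vy(D)G^*(D).$$
Since $G^*(D)$ is noncatastrophic, $\wt[\vu(D)G(D)]<+\infty$ if and only if $\wt[\vy(D)]<+\infty$, that is, if and only if every component $\vy_i(D)$ has finite weight. This one equivalence delivers both directions once unwound. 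For the forward implication, assuming $\wt[\vu G]<+\infty$ makes each $\vy_i(D)$ a Laurent polynomial; for $i\neq j$ this is exactly $\wt[\vw_i(D)]<+\infty$, giving (i), and for $i=j$ the quantity $\vw_j(D)=\vy_j(D)/\Delta(G)$ is a Laurent polynomial over $\Delta(G)$, which after reduction to lowest terms and separating the powers of $D$ takes the form $P(D)/(D^lQ(D))$ with $\gcd(P,D^lQ)=1$ and $Q(D)\mid\Delta(G)$; here the $D^{l}$ with $l\ge 1$ is forced precisely by the factor $D^\alpha$ in $\det M(D)$, which injects negative-degree terms into $\vw(D)$. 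Conversely, assuming (i) and (ii), each $\vy_i(D)=\vw_i(D)$ is finite-weight for $i\neq j$, while $\vy_j(D)=\Delta(G)\vw_j(D)=\tfrac{\Delta(G)}{Q(D)}\cdot\tfrac{P(D)}{D^l}$ is a Laurent polynomial because $Q(D)\mid\Delta(G)$; hence $\vy(D)$ and therefore $\vu(D)G(D)=\vy(D)G^*(D)$ have finite weight.

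The delicate point, and the step I expect to be the main obstacle, is the remaining claim in (ii) that $\vw_j(D)$ \emph{genuinely} has infinite weight, equivalently that $Q(D)$ is nonconstant. This cannot come from the equivalence above, which only controls $\vy(D)$; it must be extracted from the standing hypothesis that $\vu(D)$ itself has infinite weight, transported back through $\vu(D)=\vw(D)M(D)$. The trouble is that $M(D)$ is merely rational, so finiteness of $\vw(D)$ need not force finiteness of $\vu(D)$; an arbitrary admissible $M(D)$ could relocate the factor $\Delta(G)$ onto a coordinate where $\vu(D)$ carries no mass, making $\vw_j(D)$ finite. The way I would close this gap is to argue by contraposition while exploiting the specific triangular/elementary structure of the matrices produced in Lemma \ref{lx}: there the nonpolynomial entries have denominators dividing $\Delta(G)$ and act only through the distinguished coordinate, so the nonpolynomial part of $\vu(D)$ is channelled exactly through $\vw_j(D)$. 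Then if $\vw_j(D)$ had finite weight, all of $\vw(D)$ would, and $\vu(D)=\vw(D)M(D)$ would be finite-weight as well, contradicting the hypothesis. Pinning down this channelling—so that the distinguished row is truly the one carrying the catastrophic factor rather than an artefact of an adversarial choice of $M(D)$—is the crux of the proof.
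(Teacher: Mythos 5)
Your reduction is the same as the paper's: you form the companion matrix $G^*(D)$ by stripping $\Delta(G)$ off the distinguished row of $M(D)G(D)$, observe via Proposition \ref{p2} that $\Delta(G^*)$ is (up to a constant) $D^\alpha$, conclude from Theorem \ref{j} that $G^*(D)$ is noncatastrophic, and then read off $\wt[\vw_i(D)]<+\infty$ for $i\neq j$ together with $\wt[\Delta(G)\vw_j(D)]<+\infty$ from the identity $\vu(D)G(D)=\vw(D)\Lambda(D)G^*(D)$. The converse direction is handled exactly as in the paper. Up to this point the argument is correct and essentially identical to the published proof.

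The genuine gap is the one you flag yourself: you do not prove that $\vw_j(D)$ has infinite weight (equivalently, that $Q(D)$ is a nonconstant divisor of $\Delta(G)$), and the repair you sketch cannot work, because the theorem quantifies over \emph{every} invertible $M(D)$ with $\det M(D)=cD^\alpha$ and a $\Delta(G)$-divisible row, not only the triangular/elementary products built in Lemma \ref{lx}; an argument that leans on that special structure proves a weaker statement than the one asserted. The paper closes the step without any reference to Lemma \ref{lx}: since $\det M^{-1}(D)=c^{-1}D^{-\alpha}$ has finite weight, the infinite-weight input $\vu(D)$ cannot be sent by $M^{-1}(D)$ to a finite-weight vector, so $\vw(D)=\vu(D)M^{-1}(D)$ has infinite weight; as the components $\vw_i(D)$, $i\neq j$, are already known to be of finite weight, the infinite weight must be carried by $\vw_j(D)$, and together with $\wt[\Delta(G)\vw_j(D)]<+\infty$ this forces $\vw_j(D)=P(D)/(D^lQ(D))$ with $Q(D)$ a nonconstant divisor of $\Delta(G)$. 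That determinant-based inference (stated in one line in the paper, and itself rather terse for a general rational $M(D)$) is precisely the ingredient your write-up identifies as the crux but does not supply.
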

\begin{proof}\normalfont
   First, it is clear that an infinite weight series $\vu(D)$ which satisfies the assumption of the theorem produces a finite weight codeword.\\
   Let now $\vu(D)=(\vu_1(D),\ldots,\vu_k(D))$ be an infinite weight series in $\F((D))^k$ such that
  $\mbox{wt}[\vu(D)G(D)]<+\infty.$ Let $M(D)$ be an invertible matrix in  $\F(D)^{k\times k}$ satisfying the assumptions of the theorem. Without loss of generality, assume that the $k$-th (last) row of $M(D)G(D)$ is divisible by $\Delta(G)$. Then, the matrix $G'(D)$ obtained from $M(D)G(D)$ by dividing its $k$-row by $\Delta(G)$ is noncatastrophic. We have 
\begin{eqnarray*}
\vu(D)G(D)&=& \vu(D)M^{-1}(D)M(D)G(D)=(\vw_1(D),\dots,\vw_k(D))M(D)G(D)\\
&=&(\vw_1(D),\dots,\vw_{k-1}(D),\Delta(G)\vw_k(D))G'(D).
\end{eqnarray*}
 We have $\Delta(G')=\lambda \det M(D)$, where $\lambda$ is non-zero constant in $\F$, this means that the matrix $G'(D)$ is noncatastrophic. Since $\mbox{wt}[\vu(D)G(D)]<+\infty$ it follows that :  
 \begin{equation}\label{ah1}
 \mbox{wt}[(\vw_1(D),\dots,\vw_{k-1}(D),\Delta(G)\vw_k(D))]<+\infty.
 \end{equation}
This implies that
\begin{equation}\label{ah2}
 \mbox{wt}[\vw_i(D)]<+\infty \; \mbox{for}\; i=1,\ldots, k-1\;\; \mbox{and}\;\;    \mbox{wt}[\Delta(G)\vw_k(D)]<+\infty.
 \end{equation}
  Given that $\mbox{wt}[\vu(D)]=+\infty$ and $\det M^{-1}(D)$ is of finite weight,  then necessarily  $\vu(D)M^{-1}(D)=(\vw_1(D),\ldots,\vw_k(D))$ is of infinite weight. By combining (\ref{ah1}) and (\ref{ah2}), we have $\mbox{wt}[\vw_k(D)]=+\infty$ and $\mbox{wt}[\Delta(G)\vw_k(D)]<+\infty$.  This implies that $\vw_k(D)$ has  the following form
$\vu_k(D)=\frac{P(D)}{D^lQ(D)}$ where $P(D),Q(D)$ are coprime polynomials,  $Q(D)$ is a divisor of $\Delta(G)$ such that $\frac{1}{Q(D)}$ is of infinite weight   and $l$ is a positive integer. 

\end{proof}
\begin{example}\normalfont

 Let  $G(D)=
                                                  \begin{pmatrix}
                                                    1+D & D & 2 \\
                                                    2+2D & 2 & D 
                                                  \end{pmatrix}
                                                \in\Z_3[D]^{2\times3}$.\\
We have $\Delta(G)=2+D^2=(1+D)(2+D)$.
Let $\vu(D)=(\frac{D}{2+D},\frac{1}{2+D})$, we have $$\vu(D)G(D)=(\frac{D}{2+D},\frac{1}{2+D})
                                                  \begin{pmatrix}
                                                    1+D & D & 2 \\
                                                    2+2D & 2 & D \\
                                                  \end{pmatrix}
                                                =(1+D,\; 1+D,\; 0).$$ Thus, $\vu(D)$ is an infinite weight series which produces a finite weight output via $G(D)$.\\

\noindent Let the  matrices $M_1(D)=
                         \begin{pmatrix}
                           D & 1 \\
                           1 & 0 \\
                         \end{pmatrix}
                       $
 
 We have $\det M_1(D)=2$ and  $M_1(D)G(D)=\left(
                                    \begin{array}{ccc}
                                      2+D^2 & 2+D^2 & 0 \\
                                      1+D & D & 2 \\
                                    \end{array}
                                  \right)$
 Then  $M_1$  satisfies the assumptions of the theorem.
 The inverse of $M_1(D)$ is the matrix $M^{-1}(D)=
 \begin{pmatrix}
 0 & 1\\  & 2D
 \end{pmatrix}$,  and we have $$\vu(D)M^{-1}(D)=(\frac{1}{2+D},0).$$
The second component has finite weight while the first component has a dominator that is a devisor of $\Delta(G)$ fitting the form predicted by the theorem.\\
We can also take the matrix $M_2(D)=\left(
                                                             \begin{array}{cc}
                                                               1 & 0 \\
                                                               D & 1 \\
                                                             \end{array}
                                                           \right)$.
We have $\det M_2(D)=1$ and $M_2(D)G(D)=\left(
                                                                \begin{array}{ccc}
                                                                  1+D & D & 2 \\
                                                                  2+D^2 & 2+D^2 & 0 \\
                                                                \end{array}
                                                              \right)$.
  Then  $M_1$  satisfies the assumptions of the theorem. We have $M_2^{-1}(D)=
  \begin{pmatrix}
  1 & 0\\ 2D & 1
  \end{pmatrix}
  $. Then $$\vu(D)M_2^{-1}(D)=(0,\frac{1}{2+D}).$$
   The first component has finite weight while the second component has a dominator that is a devisor of $\Delta(G)$ as predicted by the theorem.                                                           
\end{example}

\section{Polynomial generator matrices   over $\ZZ$}
\noindent In this section we extend the definitions and key results concerning polynomial generator matrices  seen in the above section   to the finite ring $\ZZ$. A polynomial generator matrix, or encoder, is a $k\times n$ matrix over $\ZZ[D]$ whose rows are linearly independents over the field of rational functions $\ZZ(D)$.\\
\noindent We recall that two $k\times n$ polynomial  generator matrices  $G(D)$ and $G'(D)$ are equivalent if they generate the same code. As in the field case, this is equivalent to the existence of a $k\times k$ invertible matrix $T(D)$ over $\ZZ(D)$ such that $G(D)=T(D)G'(D)$ \cite{z}.\\
 We recall that a matrix $M(D)\in\ZZ[D]^{k\times k}$ is  unimodular  if $\det M$ is a unit in $\ZZ[D]$. This  is equivalent to $\det M=a+pQ(D)$, where $Q(D)\in\ZZ[D]$ and $ a$ is unit in $\ZZ$. Note that, $M(D)$ is unimodular if and only if its projection $\overline{M(D)}$ is unimodular over $\Z_p[D].$\\
 The next definition extends the definition of the polynomial $\Delta(G)$ to generator matrices over $\ZZ$.\\
 If $G(D)$ is a $k\times n $  polynomial generator matrix over $\ZZ$, then its projection $\overline{G(D)}$  is a polynomial generator matrix over $\Z_p$. This justifies the following definition.
\begin{definition}\normalfont
Let $G(D)$ be a  $k\times n$ polynomial generator  matrix over $\ZZ$. We define the polynomial $\Delta_p(G)$ as   the greatest common divisor $(\gcd)$ of the nonzero   $k\times k$  minors of the projected matrix $\overline{G(D)}$ in $\Z_p[D]$. That is  $$\Delta_p(G)=\Delta(\overline{G}).$$
\end{definition}
\noindent Our objective is to extend Lemma \ref{lx} to the ring case. For this, we need the following technical lemma, which is essential for the subsequent development. First we recall that a polynomial $P(D)\in\ZZ[D]$ is called regular if its projection $\overline{P(D)}$ into $\Z_p[D]$ is not the zero polynomial.
\begin{lemma}\label{l00}\normalfont
Let $P(D)$ and $Q(D)$ be two regular polynomials  in $\ZZ[D]$. Then, the polynomial $\overline{P(D)}$ divides the polynomial $\overline{Q(D)}$ in $\Z_p[D]$ if and only if, there exists an integer $i\in\{0,\ldots,r-1\}$ such that  for any integer $j\in\{i,\ldots,r-1\}$, $p^jP(D)$ divides $p^jQ(D)$.
\end{lemma}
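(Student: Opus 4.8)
The plan is to reduce the two-sided statement to a single divisibility equivalence at the top level $j=r-1$, exploiting that the condition ``$p^{j}P(D)\mid p^{j}Q(D)$'' is monotone in $j$. Indeed, if $p^{j}Q(D)=p^{j}P(D)S(D)$ for some $S(D)\in\ZZ[D]$, then multiplying by $p^{\,j'-j}$ gives $p^{j'}Q(D)=p^{j'}P(D)S(D)$ for every $j'\ge j$; hence the set $T=\{\,j\in\{0,\dots,r-1\}: p^{j}P(D)\mid p^{j}Q(D)\,\}$ is upward closed in the chain $\{0,\dots,r-1\}$. Consequently $T$ contains a final segment $\{i,\dots,r-1\}$ if and only if $T$ is nonempty, with smallest such $i$ equal to $\min T$; and $T$ is nonempty precisely when it contains its largest element $r-1$. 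So both directions will follow once we establish the \emph{endpoint equivalence}
\[
p^{r-1}P(D)\mid p^{r-1}Q(D)\ \text{ in }\ \ZZ[D]\qquad\Longleftrightarrow\qquad \overline{P(D)}\mid \overline{Q(D)}\ \text{ in }\ \Z_p[D].
\]

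The key algebraic fact for the endpoint is that the annihilator of $p^{r-1}$ in $\ZZ$ is $p\ZZ$, so the socle $p^{r-1}\ZZ$ is isomorphic to $\Z_p$ via $p^{r-1}a\mapsto \overline{a}$. Extending coefficientwise yields an additive isomorphism $p^{r-1}\ZZ[D]\cong \Z_p[D]$ carrying $p^{r-1}f(D)$ to $\overline{f(D)}$ and respecting products in the sense that $p^{r-1}\bigl(f(D)g(D)\bigr)$ corresponds to $\overline{f(D)}\,\overline{g(D)}$. Under this correspondence an identity $p^{r-1}Q(D)=p^{r-1}P(D)S(D)$ in $\ZZ[D]$ translates into $\overline{Q(D)}=\overline{P(D)}\,\overline{S(D)}$ in $\Z_p[D]$, and conversely any witness $\overline{S(D)}$ lifts to some $S(D)\in\ZZ[D]$ since reduction modulo $p$ is surjective. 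This proves the endpoint equivalence; the regularity hypothesis is used here to guarantee $\overline{P(D)}\neq 0$, so that divisibility in $\Z_p[D]$ is the ordinary notion of divisibility in a polynomial ring over a field.

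With the endpoint equivalence in hand, assembling the lemma is immediate. For $(\Leftarrow)$, if some $i$ with the stated property exists then $j=r-1\in\{i,\dots,r-1\}$, so $p^{r-1}P(D)\mid p^{r-1}Q(D)$ and hence $\overline{P(D)}\mid\overline{Q(D)}$. For $(\Rightarrow)$, if $\overline{P(D)}\mid\overline{Q(D)}$ then $r-1\in T$, so $T$ is a nonempty upward-closed subset of $\{0,\dots,r-1\}$; taking $i=\min T$ gives $T=\{i,\dots,r-1\}$, which is exactly the required integer.

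I expect the only genuine content, and thus the step worth stating carefully, to be the endpoint equivalence, i.e.\ identifying divisibility of the top $p$-multiples with divisibility of the mod-$p$ projections through the socle $p^{r-1}\ZZ\cong\Z_p$; the monotonicity and the passage to a final segment are formal. An alternative to the socle argument, closer to an explicit computation, is to lift a mod-$p$ factorization $\overline{Q(D)}=\overline{P(D)}\,\overline{S_0(D)}$ to $Q(D)=P(D)S_0(D)+p\,Q_1(D)$ and then multiply by $p^{r-1}$, annihilating the error term via $p^{r}=0$; this reaches the endpoint in one step but relies on the same annihilator structure, which is where any subtlety resides.
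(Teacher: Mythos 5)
Your proof is correct. The paper states Lemma \ref{l00} without any proof, so there is nothing to compare against; your reduction via monotonicity in $j$ to the single endpoint case $j=r-1$, which you then settle through the socle isomorphism $p^{r-1}\ZZ\cong\Z_p$ (equivalently, lifting a mod-$p$ factorization $\overline{Q}=\overline{P}\,\overline{S_0}$ to $Q=PS_0+pQ_1$ and annihilating the error term with $p^r=0$), is a complete and valid justification of the statement as written.
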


\noindent Using the last lemma  and Lemma \ref{lx} we have the following.
\begin{theorem}\normalfont\label{lmx} Let $G(D)\in\ZZ[D]^{k\times n}$. Then there exists a matrix $M(D)\in\ZZ(D)^{k\times k}$, such that  $\det \overline{M(D)}$ is a nonzero constante in $\Z_p$, and where  $M(D)G(D)$  has  a row whose projection is divisible by $\Delta_p(G)$.\\
This means that there exists an integer $i_0\in\{0,\ldots,r-1\}$ where for any integer $i\in\{i_0,\ldots,r-1\}$ the matrix $P^iM(D)G(D)$ is polynomial and has a row divisible by a polynomial $Q(D)$ satisfying $\overline{Q(D)}=\Delta_p(G)$.\\
Moreover, by dividing this row by $Q(D)$, we  obtain a noncatastrophic matrix $H(D)$ such that for any $i\in\{i_0,\ldots,r-1\}$ we have  $$\mbox{span}[p^iH(D)]=\mbox{span}[p^iG(D)].$$
\end{theorem}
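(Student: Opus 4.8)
The plan is to reduce everything to the finite field $\Z_p$ by projection, use Lemma \ref{lx} over $\F=\Z_p$ as the engine, and then lift back to $\ZZ$ with Lemma \ref{l00} as the device that converts mod-$p$ divisibility into an integral statement. First I would apply Lemma \ref{lx} to the projected matrix $\overline{G(D)}\in\Z_p[D]^{k\times n}$. Since $\Delta(\overline G)=\Delta_p(G)$, this produces $\overline M(D)\in\Z_p(D)^{k\times k}$ with $\det\overline M(D)$ a nonzero constant, such that $\overline M(D)\,\overline{G(D)}$ is polynomial with one row, say the last, divisible by $\Delta_p(G)$, and such that dividing that row by $\Delta_p(G)$ yields a basic matrix. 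Following the explicit construction in the proof of Lemma \ref{lx}, $\overline M$ is a product $N_s M_s\cdots N_1 M_1$ of unimodular elementary matrices $M_j$ and diagonal matrices $N_j$ whose only nonconstant entry divides by the irreducible factor $P_j$ of $\Delta_p(G)$.

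Next I would lift this construction factor by factor: lift each elementary $M_j$ to a unimodular $\widehat M_j\in\ZZ[D]^{k\times k}$ and each monic $P_j$ to a monic (hence regular) $\widehat P_j\in\ZZ[D]$, and form the corresponding diagonal $\widehat N_j$. Setting $M(D)=\widehat N_s\widehat M_s\cdots\widehat N_1\widehat M_1\in\ZZ(D)^{k\times k}$ gives $\overline{M(D)}=\overline M(D)$ and makes $\det M(D)$ a nonzero constant (the corner entries multiply to a unit exactly as in the field case); since its projection is nonzero, $\det M(D)$ is a unit of $\ZZ$ and $M(D)$ is invertible over $\ZZ((D))$. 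I would then choose $Q(D)\in\ZZ[D]$ with $\overline{Q(D)}=\Delta_p(G)$ whose lowest-degree coefficient is a unit: writing $\Delta_p(G)=D^a\delta(D)$ with $\delta(0)\neq0$ and lifting so that $Q(D)=D^a\widehat\delta(D)$ with $\widehat\delta(0)$ a unit, $Q(D)$ becomes a unit of $\ZZ((D))$ — a point that will be decisive for the span equality. Because $\overline{M(D)G(D)}=\overline M(D)\,\overline{G(D)}$, the last row of $M(D)G(D)$ already has projection divisible by $\Delta_p(G)$, giving the first assertion.

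The heart of the argument, and the step I expect to be the main obstacle, is passing from this mod-$p$ divisibility to the integral statement via Lemma \ref{l00}. Each entry of $M(D)G(D)\in\ZZ(D)^{k\times n}$ is a quotient of regular polynomials whose projection is polynomial, so Lemma \ref{l00} supplies, for each entry, an exponent beyond which multiplication by $p^j$ clears the denominator. For the last row I would instead apply Lemma \ref{l00} to the entry divided by $Q(D)$, whose projection is still polynomial (it is the corresponding entry of the basic matrix $\overline H$), obtaining exponents beyond which $Q(D)$ divides $p^j$ times that entry. Taking $i_0$ to be the maximum of the finitely many exponents produced, for every $i\in\{i_0,\dots,r-1\}$ the matrix $p^iM(D)G(D)$ is polynomial and its last row is divisible by $Q(D)$. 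The delicate point is precisely that this divisibility over $\ZZ$ need not hold before scaling by $p^{i_0}$, and Lemma \ref{l00} is exactly what quantifies the necessary power of $p$; keeping the finitely many per-entry thresholds uniform is the technical core.

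Finally I would set $H(D)=E(D)M(D)G(D)$ with $E(D)=\mathrm{diag}(1,\dots,1,1/Q(D))$, so that $\overline{H(D)}$ is the basic matrix from the field step; then $\Delta_p(H)=\Delta(\overline H)=1$ is a power of $D$, so $H(D)$ is noncatastrophic by Theorem \ref{j} together with the Massey--Mittelholzer criterion. By the previous step $p^iH(D)=E(D)\,\bigl(p^iM(D)G(D)\bigr)$ is polynomial for $i\in\{i_0,\dots,r-1\}$. Since $M(D)$ and $Q(D)$ are both units of $\ZZ((D))$, the matrix $E(D)M(D)$ is invertible over $\ZZ((D))$, so left multiplication by it preserves the generated $\ZZ((D))$-row module; applying this to $p^iG(D)$ yields $\mathrm{span}[p^iH(D)]=\mathrm{span}[p^iG(D)]$ for each such $i$, which completes the proof.
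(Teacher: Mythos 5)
Your proposal is correct and follows essentially the same route the paper intends: the paper states this theorem with no written proof beyond the remark that it follows from Lemma \ref{l00} and Lemma \ref{lx}, and your argument is precisely that combination --- apply Lemma \ref{lx} to $\overline{G(D)}$ over $\Z_p$, lift the resulting matrix and the factors of $\Delta_p(G)$ to $\ZZ$, and use Lemma \ref{l00} entrywise to produce the uniform threshold $i_0$ beyond which $p^iM(D)G(D)$ is polynomial with the required row divisibility. The only cosmetic slip is calling $\det M(D)$ a unit of $\ZZ$ rather than of $\ZZ[D]$ (it has the form $a+pR(D)$ with $a$ a unit), which does not affect the argument since the theorem only constrains $\det\overline{M(D)}$.
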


\begin{example}\normalfont
Let $G(D)=
\begin{pmatrix} 1+D & 9+D & 1+5D\\
D  & 5D^2 & 2+D^2
\end{pmatrix}\in\Z_{16}[D]^{2\times 3}$. We have $\Delta_p(G)=\Delta(\overline{G})=D(1+D)^2$. Let the matrix $M(D)=
\begin{pmatrix}
\frac{3}{1+D} & 0 \\
D & 3+D
\end{pmatrix}$. We have $\overline{M(D)G(D)}=
\begin{pmatrix}
1 & 1 & 1\\ 0 & D(1+D)^2& D(1+D)^2
\end{pmatrix}
$  where the second row   is divisible by $\Delta_p(G)$. It is easy to check  that the matrix $M(D)G(D)$ is not polynomial but the matrices  $2^2M(D)G(D)$ and 
$2^3M(D)G(D)$ are. Moreover, the second row of $ 2^3M(D)G(D)=\begin{pmatrix}
8 & 8 & 8\\
0 & 8D(1+D^2) & 8D(1+D^2)
\end{pmatrix}
$  is divisible by $\Delta_p(G)$. By dividing the second row by $\Delta_p(G)$ we obtain the matrix $\begin{pmatrix}
8 & 8 & 8\\
0 & 8 & 8
\end{pmatrix}=2^3\begin{pmatrix}
1 & 1 & 1\\
0 & 1 & 1
\end{pmatrix}.$ We take $H(D)=\begin{pmatrix}
1 & 1 & 1\\
0 & 1 & 1
\end{pmatrix}$ which is noncatastrophic  and we have $$\mbox{span}[2^3H(D)]=\mbox{span}[2^3G(D)].$$
\end{example}
\noindent Using Remark \ref{r11} and Lemma \ref{lmx} we have the following.

\begin{remark}\label{rmx}\normalfont
\begin{enumerate}
\item If the matrix $M(D)$ in the  Theorem \ref{lmx} happens to be unimodular, then $G(D)$ has an equivalent polynomial matrix $M(D)G(D)$ which has a row whose  projection over $\Z_p$ is divisible by $\Delta_p(G)$.
\item   An irreducible  polynomial  $P(D)\in\ZZ[D]$ divides a row of the matrix $p^iM(D)G(D)$  for some  unimodular matrix $M(D)\in \ZZ[D]^{k\times k}$, and some positive integer $i\in\{0,\dots,r-1\}$, if and only if, $\overline{P(D)}$ divides $\Delta_p(G)$.
    \item An irreducible polynomial $P(D)\in\ZZ[D]$ divides all minors of $G(D)$ if and only if,  there exists an unimodular matrix $M(D)$ such that $M(D)G(D)$ has a row divisible by $P(D)$.
\end{enumerate}

\end{remark}

\subsection{Catastrophic polynomial matrices}
A key characteristic of catastrophic polynomial generator matrices over $\ZZ$ lies in their projection over the residue field:
\begin{theorem}\cite{z}\label{j12}\normalfont\\
A polynomial matrix $G(D)$ over $\ZZ$ is catastrophic if and only if $\overline{G(D)}$ is catastrophic.
\end{theorem}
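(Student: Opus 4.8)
The plan is to prove Theorem \ref{j12} by reducing the characterization of catastrophicity over $\ZZ$ to the already-established characterization over the residue field $\Z_p$, exploiting the projection map $G(D)\mapsto\overline{G(D)}$. The key structural fact I would rely on is that the projection is a ring homomorphism $\ZZ((D))\to\Z_p((D))$, so that it commutes with matrix multiplication: $\overline{\vu(D)G(D)}=\overline{\vu(D)}\,\overline{G(D)}$. I would also use that the weight behaves compatibly with the $p$-adic filtration: writing any series $\vx(D)\in\ZZ((D))^n$ in its $p$-adic expansion $\vx(D)=\sum_{i=0}^{r-1}p^i\vx_i(D)$ with each $\vx_i(D)\in\A((D))^n$, the support of $\vx$ is the union of the supports of the layers $\vx_i$, and the bottom layer $\vx_0$ is exactly $\overline{\vx}$ read in $\A((D))$. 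In particular $\wt[\overline{\vx(D)}]\le\wt[\vx(D)]$, and a finite-weight series over $\ZZ$ projects to a finite-weight series over $\Z_p$.

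The proof splits into two implications. First I would show that if $\overline{G(D)}$ is catastrophic then $G(D)$ is catastrophic. By hypothesis there is an infinite-weight $\overline{\vu}(D)\in\Z_p((D))^k$ with $\wt[\overline{\vu}(D)\,\overline{G(D)}]<+\infty$. I lift $\overline{\vu}(D)$ to a series $\vu(D)\in\ZZ((D))^k$ by taking coefficients in $\A$, so that $\vu(D)$ still has infinite weight (its support equals that of $\overline{\vu}(D)$). The output $\vu(D)G(D)$ need not itself have finite weight, but its projection $\overline{\vu(D)G(D)}=\overline{\vu}(D)\,\overline{G(D)}$ does. The idea here is to peel off the finite-weight bottom layer and iterate: one shows that an infinite-weight input whose output projects to finite weight can be adjusted, layer by layer through the $p$-adic filtration, to produce a genuine infinite-weight input over $\ZZ$ whose total output has finite weight. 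This layerwise lifting is where I expect the main obstacle to lie, since one must control the feedback from higher $p$-powers carefully and ensure the infinite support survives the correction at each of the $r$ levels.

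Second I would show the converse, that if $G(D)$ is catastrophic then $\overline{G(D)}$ is catastrophic. Suppose $\vu(D)\in\ZZ((D))^k$ has infinite weight but $\wt[\vu(D)G(D)]<+\infty$. Decompose $\vu(D)=\sum_{i=0}^{r-1}p^i\vu_i(D)$ with $\vu_i\in\A((D))^k$; since $\vu$ has infinite weight, at least one layer $\vu_i$ has infinite weight. Let $i_0$ be the smallest index with $\wt[\vu_{i_0}(D)]=+\infty$. Then $p^{i_0}\vu_{i_0}(D)+p^{i_0+1}(\cdots)=\vu(D)-\sum_{i<i_0}p^i\vu_i(D)$ still has finite-weight contributions removed, and multiplying by $G(D)$ and reading modulo the appropriate power of $p$ isolates a relation of the form $\wt[\overline{\vu_{i_0}(D)}\,\overline{G(D)}]<+\infty$ with $\overline{\vu_{i_0}(D)}$ of infinite weight. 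Concretely, after subtracting the finite-weight lower layers (whose outputs are finite weight since $G(D)$ is polynomial and hence weight-preserving on finite-weight inputs), division by $p^{i_0}$ and projection to $\Z_p$ exhibit the catastrophic input for $\overline{G(D)}$.

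Combining the two implications yields the equivalence. I would organize the write-up so that the weight-versus-$p$-adic-filtration bookkeeping is stated once as a preliminary observation, after which both directions follow from the homomorphism property of the projection together with the fact that a polynomial matrix over $\ZZ$ maps finite-weight inputs to finite-weight outputs. The delicate point throughout is that the projection can only decrease weight, so infinite weight upstairs need not persist downstairs and vice versa; the layer-by-layer argument through the length-$r$ filtration is precisely what repairs this, and getting the induction on $r$ clean is the crux of the proof.
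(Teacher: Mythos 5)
The paper states this theorem as a citation from \cite{z} and supplies no proof of its own, so there is no internal argument to compare yours against; I therefore assess your proposal on its own terms. Your second implication ($G(D)$ catastrophic implies $\overline{G(D)}$ catastrophic) is essentially sound: taking the lowest $p$-adic layer $\vu_{i_0}(D)$ of infinite weight, discarding the finite-weight layers below it (whose outputs have finite weight because $G(D)$ is polynomial), and writing the remainder as $p^{i_0}\vw(D)$ with $\overline{\vw(D)}=\overline{\vu_{i_0}(D)}$, the condition $\wt[p^{i_0}\vw(D)G(D)]<+\infty$ forces $\vw(D)G(D)$ to have finite support modulo $p^{r-i_0}$, hence modulo $p$, which exhibits an infinite-weight catastrophic input for $\overline{G(D)}$.

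Your first implication is where there is a genuine gap, and you flag it yourself. Lifting $\overline{\vu}(D)$ to $\vu(D)\in\A((D))^k$ only controls the bottom $p$-adic layer of the output $\vu(D)G(D)$; the plan to ``adjust the input layer by layer'' so that the higher layers of the output also become finite weight cannot work in general, because cancelling the output at level $p^i$ requires the correction to lie in the image of $G(D)$ at that level, and there is no reason a suitable preimage exists. The repair is much simpler and needs no iteration: replace the lifted input by $p^{r-1}\vu(D)$. Since the coefficients of $\vu(D)$ lie in $\A$, multiplication by $p^{r-1}$ preserves its support, so it still has infinite weight; on the other hand $p^{r-1}\vu(D)G(D)=p^{r-1}\bigl(\vu(D)G(D)\bigr)$ depends only on $\vu(D)G(D)$ modulo $p$, so its support equals that of $\overline{\vu}(D)\,\overline{G(D)}$, which is finite by hypothesis. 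This single multiplication annihilates all the higher $p$-adic layers at once, produces an infinite-weight input over $\ZZ$ with finite-weight output, and closes the gap.
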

This immediately  implies that  a $k\times n$ polynomial matrix $G(D)$ is noncatastrophic if and only if $\Delta_p(G)$ is a power of $D$.\\

\noindent The following theorem is  essential  for the main result of this work. 
\begin{proposition}\label{cr}\normalfont
Let $G(D)$ be a $k\times n$ polynomial generator matrix over $\ZZ$. Then,  for all $i\in\{0,\ldots,r-1\}$, there exists a    matrix $G_i(D)\in\ZZ[D]^{k\times n}$ satisfying $$\mbox{span}[p^iG_i(D)]=\mbox{span}[p^iG(D)],$$  and for any series $\vu(D)\in \A((D))^k$ 
\begin{equation}\label{tt}
\mbox{wt}[p^i\vu(D)G_i(D)]<+\infty\; \Longleftrightarrow\; \mbox{wt}[\vu(D)]<+\infty.
\end{equation}
\end{proposition}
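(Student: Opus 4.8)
The forward implication in (\ref{tt}) is free: if $\vu(D)$ has finite weight it is a finite-support Laurent vector, so $\vu(D)G_i(D)$ is a finite $\ZZ[D]$-combination of shifts of the rows of the polynomial matrix $G_i(D)$, hence of finite weight, and multiplying by $p^i$ cannot enlarge the support. So the whole content of (\ref{tt}) is the reverse implication, and my first move is to record two elementary weight facts. Since $p^i a=0$ in $\ZZ$ exactly when $a\equiv 0\pmod{p^{r-i}}$, one has $\mbox{wt}[p^i\vu(D)G_i(D)]=\mbox{wt}_{\Z_{p^{r-i}}}\![\vu(D)G_i(D)\bmod p^{r-i}]$; reducing one layer further gives $\mbox{wt}[\overline{\vu(D)}\,\overline{G_i(D)}]\le \mbox{wt}[p^i\vu(D)G_i(D)]$. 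Because the entries of $\vu(D)$ lie in $\A$, its mod-$p$ reduction has the same support, so $\mbox{wt}[\overline{\vu(D)}]=\mbox{wt}[\vu(D)]$. These remarks yield a clean sufficient condition: \emph{if $\overline{G_i(D)}$ is noncatastrophic over $\Z_p$, then} (\ref{tt}) \emph{holds}, since a finite-weight output forces $\overline{\vu}\,\overline{G_i}$ finite, hence $\overline{\vu}$ finite by noncatastrophicity, hence $\vu$ finite. Thus the only real issue is to reconcile this projected noncatastrophicity with the span requirement $\mbox{span}[p^iG_i(D)]=\mbox{span}[p^iG(D)]$, and I would organise the proof around the threshold $i_0$ produced by Theorem \ref{lmx}.

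For the range $i\in\{i_0,\dots,r-1\}$ I would simply take $G_i(D)=H(D)$, the matrix furnished by Theorem \ref{lmx}. That theorem gives $\mbox{span}[p^iH(D)]=\mbox{span}[p^iG(D)]$ precisely on this range, and $H(D)$ is noncatastrophic, i.e.\ $\overline{H(D)}$ is noncatastrophic over $\Z_p$; by the sufficient condition above, (\ref{tt}) is then immediate and no further work is needed.

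The substance is the low range $i<i_0$. Here the span condition pins $G_i(D)$ down modulo $p^{r-i}$, so $\overline{G_i(D)}$ is forced to be a field encoder of $\overline{\C}$ and hence carries the full factor $\Delta_p(G)=\Delta(\overline{G})$; indeed the very meaning of $i_0$ (through Lemma \ref{l00}) is that below it one cannot divide this factor out while preserving the level-$i$ span, so $\overline{G_i}$ is unavoidably catastrophic and the easy argument fails. Instead I would build $G_i(D)$ as a \emph{partial} version of the reduction in Theorem \ref{lmx}: factor $\Delta_p(G)$ into irreducibles and, using Lemma \ref{l00}, divide out only those factors whose threshold is $\le i$. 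This preserves $\mbox{span}[p^iG_i]=\mbox{span}[p^iG]$ while leaving a residual $\Delta_p(G_i)$ all of whose irreducible factors have threshold $>i$. To verify (\ref{tt}), suppose an infinite-weight $\vu(D)\in\A((D))^k$ produced finite weight; reducing modulo $p^{i+1}$ and using the weight facts above, $\overline{\vu}$ would be an infinite-weight catastrophic input for $\overline{G_i}$, so by Theorem \ref{px} its $\overline{M}^{-1}$-image has a single infinite component carrying a denominator $Q(D)$ dividing $\Delta_p(G_i)$. Since every factor of $\Delta_p(G_i)$ has threshold $>i$, Lemma \ref{l00} shows that the cancellation making the bottom layer finite cannot persist through the $p$-adic layers $c_1,\dots,c_{r-1-i}$ of $\vu(D)G_i(D)$; one of these layers is therefore infinite, so $\mbox{wt}[p^i\vu(D)G_i(D)]=+\infty$, a contradiction.

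The main obstacle is exactly this last step: turning the field-level description of catastrophic inputs (Theorem \ref{px}) into a statement about the integer computation $\vu(D)G_i(D)$, and locating an infinite $p$-adic layer among $c_1,\dots,c_{r-1-i}$. The bookkeeping is governed by the thresholds of Lemma \ref{l00}, and it is reassuring that the number of available ``carry'' layers, $r-1-i$, grows as $i$ decreases, compensating for the larger uncancelled part of $\Delta_p$ one must tolerate, while the extreme layer $i=r-1$ always satisfies $i\ge i_0$ and so lies in the easy regime. Making this layerwise carry estimate precise, uniformly over all catastrophic inputs of the form given by Theorem \ref{px}, is where the real work lies.
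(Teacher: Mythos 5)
Your setup is sound and runs parallel to the paper's: the easy direction, the observation that $\mbox{wt}[p^i\vu(D) G_i(D)]<\infty$ forces $\mbox{wt}[\overline{\vu(D)}\,\overline{G_i(D)}]<\infty$ while $\mbox{wt}[\overline{\vu(D)}]=\mbox{wt}[\vu(D)]$ for $\vu(D)\in\A((D))^k$, and the construction of $G_i(D)$ by dividing out, at level $i$, exactly those irreducible factors that Lemma \ref{l00} permits. This is essentially the paper's construction, which repeatedly removes common irreducible divisors of the $k\times k$ minors of $p^iG(D)$ until none remain; your split into the regimes $i\ge i_0$ and $i<i_0$ is a harmless reorganization (the paper treats all levels uniformly), and your explicit ``sufficient condition'' via noncatastrophicity of $\overline{G_i(D)}$ is a clean way to dispose of the upper range.

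The gap is the step you yourself flag as ``where the real work lies'': showing that the residual denominator $Q(D)$ produced by Theorem \ref{px} forces an infinite $p$-adic layer of $\vu(D)G_i(D)$. You assert that Lemma \ref{l00} shows ``the cancellation cannot persist through the layers $c_1,\dots,c_{r-1-i}$,'' but Lemma \ref{l00} is a statement about divisibility of polynomials, not about weights of series products, and no argument is supplied to bridge the two; since everything else in the proposition is routine, this unexecuted step is the entire content. The paper closes it without any layerwise carry analysis: it writes $\vu(D)G_i(D)=\bigl(\vu(D)M^{-1}(D)\bigr)\bigl(M(D)G_i(D)\bigr)$ with $\vu(D) M^{-1}(D)=(w_1,\dots,w_{k-1},P(D)/(D^lQ(D)))$, the $w_j$ of finite weight, and observes that finiteness of the output then forces $Q(D)$ to divide the distinguished row of $M(D)G_i(D)$ at level $i$ --- hence, by Remark \ref{rmx}, to divide all the $k\times k$ minors of $p^iG_i(D)$ --- which is exactly what the construction of $G_i(D)$ has ruled out. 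If you replace your carry estimate by this divisibility contradiction, your proof coincides with the paper's; as written, the central implication is asserted rather than proved.
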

\begin{proof}\normalfont
  We beginning by $i=0$, we have $p^0G=G$. Suppose that $P_0(D)$ is an irreducible polynomial, not a power of $D$, that divides each of $k\times k$ minors of $G(D)$  (if not, we take $G_0=G$). Then there exists a nonsingular matrix $N_0(D)\in\ZZ(D)^{k\times k}$, $\det N_0(D)=cte$, such that $N(D)G(D)$ has a row divisible by $P_0$. We divides this row by $P_0$, we obtain a matrix $G_0(D)$ equivalent to $G(D)$ and where the minors have no irreducible polynomial as common divisor.\\
  Let now an infinite weight series $\vu(D)\in \A((D))^k$ such that $\mbox{wt}[\vu(D)G_0(D)]<+\infty$. Then 
  $$\mbox{wt}[\overline{\vu(D)}\overline{G_0(D)}]=\mbox{wt}[\vu(D)\overline{G_0(D)}]<+\infty.$$
Let $M(D)\in\Z_p(D)^{k\times k}$, $\det M(D)=cte$, such that the k-th row of $M(D)\overline{G_0(D)}$ is divisible by $\Delta(\overline{G_0})=\Delta_p(G_0)$. By Theorem  \ref{px}, the series $\vu(D)M^{-1}(D)$ has the following form  $$\vu(D)M^{-1}(D)=(w_1,\ldots,w_{k-1},\frac{P(D)}{D^lQ(D)})$$ where $\mbox{wt}[w_i]<+\infty$ and $Q(D)$ is a divisor of $\Delta_p(G_0)$ (not a power of $D$).\\
As $\mbox{wt}[\vu(D)G_0(D)]<+\infty$, we have $$\mbox{wt}[\vu(D)M^{-1}(D)M(D)G_0(D)]=\mbox{wt}[(w_1,\ldots,w_{k-1},\frac{P(D)}{D^lQ(D)})M(D)G_0(D)]<+\infty.$$ This implies that the k-th row of $M(D)G_0(D)$ is divisible by $Q(D)$ which is absurd. We conclude then: for any $\vu(D)\in\A((D))^k$, we have $$\mbox{wt}[\vu(D)G_0(D)]<+\infty\;\Longleftrightarrow\mbox{wt}[\vu(D)]<+\infty.$$ 
For $i=1$, Suppose that $P_1(D)$ is an irreducible polynomial, not a power of $D$, that divides each of $k\times k$ minors of $pG_0(D)$.  Then there exists a matrix $N_1(D)$, $\det N_1(D)$ is a nonzero constant in $\ZZ$, such that  $pN_1(D)G_0(D)$ has a row divisible by $P_1$. We divides this row by $P_1$ we obtain a matrix $pG_1(D)$ where the minors of $G_1(D)$ have no common divisor and  $$span[pG(D)]=span[pG_0(D)]=span [pG_1(D)].$$  Let now an infinite weight series $\vu(D)\in \A((D))^k$ such that $\mbox{wt}[\vu(D)G_1(D)]<+\infty$. We repeat the same reasoning seen for $i=0$, we obtain: for any $\vu(D)\in\A((D))^k$, $$\mbox{wt}[p\vu(D)G_1(D)]<+\infty\;\Longleftrightarrow\mbox{wt}[\vu(D)]<+\infty.$$
So, for any $i\in\{0,\ldots,r-1\}$ we find a matrix $G_i(D)$ satisfying \\ $\mbox{span}[p^iG_i(D)]=\mbox{span}[p^iG(D)]$ and for any $\vu(D)\in\A((D))^k$,  $$\mbox{wt}[p^i\vu(D)G_i(D)]<+\infty\; \Longleftrightarrow\; \mbox{wt}[\vu(D)]<+\infty.$$
\end{proof}
 The following example illustrate this result.
\begin{example}\normalfont
We consider  the ring $\Z_{27}$ and  the matrix $$G(D)=\left(
                                                  \begin{array}{cc}
                                                     2+7D^2 & 5+3D+19D^2+9D^3 \\
                                                  \end{array}
                                                \right).$$ By a simple calculus  we have   $$\Delta_3(G)=\Delta(\overline{G})=2+D^2=(1+D)(2+D).$$
The minors of $G(D)$ do not have an irreducible polynomial as common divisor, then we take $G_0(D)=G(D)$.

\begin{eqnarray*}
3G(D)&=&\left(
\begin{array}{cc}
 6+21D^2 & 15+9D+3D^2 \\
  \end{array}
 \right) \\
&=&3 \left(
 \begin{array}{cc}
  7(26+D)(1+D) & (26+D)(4+D) 
   \end{array}
   \right).
\end{eqnarray*}
The entries of  $3G(D)$ are divisible by $26+D$ where its projection  into $\Z_3[D]$ divides $\Delta_3(G)$. We take 
$G_1(D)=\begin{pmatrix}
7(1+D) & 4+D
\end{pmatrix}.$
We now look for the matrix $G_2$. We have 
$$9G_1(D)=9 
\begin{pmatrix}
7(1+D) & (4+D) 
\end{pmatrix}=9
\begin{pmatrix}
1+D & 1+D
\end{pmatrix}.$$
We take then $G_2(D)=\begin{pmatrix} 1 & 1
\end{pmatrix}.$\\
All matrices $G_0, G_1$ and $G_2$ satisfy the  properties given in the last proposition.
\end{example} 
\section{Application to convolutional codes }
In this section  we introduce a polynomial $\Delta_p(\C)$ for any free code $\C$ which is a generalisation of the polynomial $\Delta_p(G)$.
We will see too that this polynomial characterise  catastrophic free codes.
 
\begin{definition}\normalfont
A $k\times n$ polynomial generator matrix $G(D)$ is called {\it reduced internal degree matrix} (RIDM) if, among all polynomial matrices of the form $M(D)G(D)$, where $M(D)$ is a $k\times k$ non-singular matrix over $\Z_{p^r}(D)$, it has the minimum possible internal degree.
\end{definition}
Next theorem is a technical result that we needs for the sequel of this section.

\begin{theorem}{[Theorem XIII.6\cite{mc}}]\label{tech}\normalfont
 Let $P$ be a regular polynomial in $\Z_{p^r}[D]$. Then there is a monic polynomial $P_1$ with $\overline{P}=\overline{P_1}$ and a unit polynomial $P_2$ in $\Z_{p^r}[D]$ such that $P=P_1P_2.$

\end{theorem}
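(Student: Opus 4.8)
The plan is to prove the factorization by Hensel lifting, realized concretely as an induction on the exponent $r$. Write $P(D)=\sum_i c_iD^i$ and let $m=\deg\overline{P}$; since $P$ is regular, $m$ exists and the coefficient $c_m$ is a unit in $\ZZ$ (its reduction $\overline{c_m}$ is nonzero in $\Z_p$). Over the residue field we have the factorization $\overline{P}=\overline{c_m}\,\overline{P_1}$, where $\overline{P_1}=\overline{c_m}^{-1}\overline{P}$ is the monic normalization of $\overline{P}$, of degree $m$. The two factors $\overline{P_1}$ (monic, degree $m$) and the nonzero constant $\overline{c_m}$ are coprime in $\Z_p[D]$, simply because a unit is coprime to everything, and this coprimality is the engine that drives the lift. (When $\overline{P}$ is itself monic --- the case used in the sequel, where $\overline{P}$ arises as the monic gcd $\Delta_p$ --- one has $\overline{P_1}=\overline{P}$ exactly, matching the statement.)

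First I would dispose of the base case $r=1$, where $\ZZ$ is the field $\Z_p$: there $P=P_1P_2$ with $P_1$ the monic normalization and $P_2=c_m$ a constant unit, and the claim is immediate. For the inductive step, assume the statement over $\Z_{p^{r-1}}$. Reducing $P$ modulo $p^{r-1}$ and applying the induction hypothesis yields a factorization of the reduced polynomial into (monic)$\times$(unit); lift its two factors arbitrarily to $w_0,h_0\in\ZZ[D]$, choosing $w_0$ monic of degree $m$ and $h_0$ a unit polynomial. Then $P-w_0h_0\in p^{r-1}\ZZ[D]$, say $P-w_0h_0=p^{r-1}E$ for some $E\in\ZZ[D]$.

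The key step is to correct $w_0,h_0$ within the last layer $p^{r-1}\ZZ[D]\cong\Z_p[D]$. I would seek $w=w_0+p^{r-1}\delta w$ and $h=h_0+p^{r-1}\delta h$ with $\deg\delta w<m$, so that $w$ stays monic of degree $m$ while the extra degrees of $P$ are absorbed into $h$. Expanding the product and using $p^{2(r-1)}=0$ (valid for $r\ge 2$) reduces the requirement $P=wh$ to the single congruence $\overline{w_0}\,\delta h+\overline{h_0}\,\delta w\equiv\overline{E}\pmod p$ in $\Z_p[D]$. Because $\overline{w_0}$ is monic (hence has a unit leading coefficient) and $\overline{h_0}=\overline{c_m}$ is a unit constant, this congruence is solvable under the constraint $\deg\delta w<m$: divide $\overline{h_0}^{-1}\overline{E}$ by the monic $\overline{w_0}$ and read off $\delta w$ from the remainder and $\delta h$ from the quotient. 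This division-with-remainder step, which the coprimality guarantees, is the main obstacle; everything else is bookkeeping. Finally, $P_2=h$ is a unit polynomial because its reduction $\overline{h}=\overline{c_m}$ is a nonzero constant, so $h$ has unit constant term and all higher coefficients in $(p)$, while $P_1=w$ is monic with $\overline{P_1}$ the monic part of $\overline{P}$, completing the factorization $P=P_1P_2$.
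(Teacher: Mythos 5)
Your proof is correct, but there is nothing in the paper to compare it against: the paper imports this statement wholesale as Theorem XIII.6 of McDonald's \emph{Finite rings with identity} \cite{mc} and offers no argument. What you have written is a self-contained Hensel-lifting proof, and it is essentially the standard one for ``regular $=$ monic $\times$ unit'' over a local ring with nilpotent maximal ideal: the base case over the residue field, an arbitrary lift from $\Z_{p^{r-1}}$ to $\Z_{p^r}$, and a correction $w=w_0+p^{r-1}\delta w$, $h=h_0+p^{r-1}\delta h$ whose existence reduces (using $p^{2(r-1)}=0$ for $r\ge 2$) to solving $\overline{w_0}\,\delta h+\overline{h_0}\,\delta w=\overline{E}$ in $\Z_p[D]$ with $\deg\delta w<m$, which division by the monic $\overline{w_0}$ settles since $\overline{h_0}=\overline{c_m}$ is a unit constant. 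The degree bookkeeping ($w$ stays monic of degree $m$, the excess degree of $P$ is carried by $h$, and $\overline{h}=\overline{c_m}$ forces $h$ to be a unit of $\ZZ[D]$) is all in order; the only microscopic slip is that $\delta h$ is the quotient multiplied by $\overline{h_0}$ rather than the quotient itself, which changes nothing. You are also right to flag that the theorem as stated in the paper, with $\overline{P}=\overline{P_1}$ and $P_1$ monic, is literally attainable only when $\overline{P}$ is already monic; in general one gets $\overline{P_1}=\overline{c_m}^{-1}\overline{P}$, the monic associate, and $\overline{P}=\overline{P_1}\,\overline{P_2}$ with $\overline{P_2}$ the constant $\overline{c_m}$. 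Since the paper applies the theorem where the reduction is a monic gcd, this costs nothing, but your version of the statement is the accurate one.
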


Next theorem is  fundamental  for the sequel of this section.
\begin{theorem}\label{imp}\normalfont
Let $G_1$ and $G_2$ be two equivalent reduced internal degree matrices, then $\Delta_p(G_1)=\Delta_p(G_2)$.
\end{theorem}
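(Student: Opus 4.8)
The plan is to transfer the whole statement to the residue field via the multiplicativity of $\Delta$ and then to exploit the minimality built into the definition of a reduced internal degree matrix. Since $G_1$ and $G_2$ are equivalent there is an invertible $T(D)\in\ZZ(D)^{k\times k}$ with $G_1(D)=T(D)G_2(D)$, and $\det T$ is then a unit of $\ZZ(D)$. The reduction to $\Z_p$ is legitimate: every denominator occurring in $T$ has a unit lowest-degree coefficient, so its image in $\Z_p[D]$ is nonzero, and $\overline{T}\in\Z_p(D)^{k\times k}$ satisfies $\overline{G_1}=\overline{T}\,\overline{G_2}$ with $\det\overline{T}=\overline{\det T}\neq 0$. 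Applying Proposition \ref{p2} over the field $\Z_p$ to the factorization $\overline{G_1}=\overline{T}\,\overline{G_2}$ gives
\begin{equation}\label{keyrel}
\Delta_p(G_1)=\lambda\,\Delta_p(G_2)\,\det\overline{T},\qquad\lambda\in\Z_p .
\end{equation}
As $\Delta_p(G_1)$ and $\Delta_p(G_2)$ are normalized to be monic, the theorem is now equivalent to the single assertion that $\det\overline{T}$ is a nonzero constant.

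To control $\det\overline{T}$ I would argue irreducible factor by irreducible factor. For a monic irreducible $P(D)\in\Z_p[D]$ let $v_P$ denote the associated ($P$-adic) valuation on $\Z_p(D)$. Relation (\ref{keyrel}) reads, for every $P$,
\begin{equation}\label{valrel}
v_P\big(\Delta_p(G_1)\big)-v_P\big(\Delta_p(G_2)\big)=v_P\big(\det\overline{T}\big),
\end{equation}
so it suffices to show that $v_P(\Delta_p(G_1))=v_P(\Delta_p(G_2))$ for all $P$. The link with the hypothesis is that the internal degree of a polynomial matrix equals, up to the fixed complexity $\nu_p$ of the projected code $\overline{\C}$, the quantity $\deg\Delta_p$: factoring $\overline{G}=M\,G_b$ through a basic encoder $G_b$ of $\overline{\C}$ makes $M=\overline{G}\,G_b^{-r}$ polynomial with $\Delta(\overline{G})=\det M$ up to a scalar, whence $\mathrm{intdeg}(\overline{G})=\deg\det M+\nu_p=\nu_p+\deg\Delta_p(G)$. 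Consequently a reduced internal degree matrix is exactly one that minimizes $\deg\Delta_p$ over its equivalence class.

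With this in hand I would obtain the equality of valuations by a surgical reduction argument. Suppose, for contradiction, that some $P$ satisfies $v_P(\Delta_p(G_1))\neq v_P(\Delta_p(G_2))$; say $v_P(\Delta_p(G_1))<v_P(\Delta_p(G_2))$. The matrix $G_1$ then witnesses that the $P$-multiplicity of $\Delta_p$ can be brought strictly below its value on $G_2$ while staying inside the equivalence class. I would use this to manufacture an equivalent polynomial matrix $G_2'$ obtained from $G_2$ by lowering only the $P$-part of its minors down to $v_P(\Delta_p(G_1))$ and leaving the multiplicities of all other irreducibles untouched; such a $G_2'$ has $\deg\Delta_p(G_2')=\deg\Delta_p(G_2)-(\deg P)\big(v_P(\Delta_p(G_2))-v_P(\Delta_p(G_1))\big)<\deg\Delta_p(G_2)$, hence strictly smaller internal degree, contradicting that $G_2$ is a reduced internal degree matrix. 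By symmetry no $P$ can have strict inequality either way, so (\ref{valrel}) forces $v_P(\det\overline{T})=0$ for every $P$; thus $\det\overline{T}$ has neither zeros nor poles and is a nonzero constant, and (\ref{keyrel}) yields $\Delta_p(G_1)=\Delta_p(G_2)$.

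The step I expect to be the main obstacle is the construction of the surgical matrix $G_2'$, i.e.\ the claim that a single irreducible factor of $\Delta_p$ can be removed by a polynomial equivalence over $\ZZ$ without perturbing the other factors. This is precisely the point at which the ring $\ZZ$ behaves differently from a field: by the ring version of McEliece's reduction, Theorem \ref{lmx}, one can find $M(D)\in\ZZ(D)^{k\times k}$ with $\det\overline{M}$ constant producing a row of $M(D)G_2(D)$ whose projection is divisible by a lift $Q(D)$ of $P$ (with $\overline{Q}=P$, normalized as a monic times a unit via Theorem \ref{tech}), but the clean division of that row by $Q(D)$ is only guaranteed after multiplying by a suitable power $p^{i_0}$, reflecting the genuinely catastrophic phenomenon that a factor visible over $\Z_p$ need not be removable over $\ZZ$. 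The crux of the argument is therefore to show that whenever the factor $P$ is removable at all within the class — which is exactly what $v_P(\Delta_p(G_1))<v_P(\Delta_p(G_2))$ guarantees — the reduction localized at $P$ can be carried out at the top $p$-adic level without introducing or altering any other irreducible factor of $\Delta_p$, so that the minimal-degree $\Delta_p$ is not merely of minimal degree but is uniquely determined.
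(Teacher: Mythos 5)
Your opening reduction coincides with the paper's: write $G_1=T\,G_2$, project to $\Z_p$, apply Proposition \ref{p2}, and reduce the theorem to showing that $\det\overline{T}$ is a nonzero constant. From there, however, your argument rests on two steps that do not hold up. First, you assert that a reduced internal degree matrix is exactly one that minimizes $\deg\Delta_p$ over its equivalence class, deducing this from the field identity $\mathrm{intdeg}(\overline{G})=\nu_p+\deg\Delta_p(G)$. But the internal degree in the definition of an RIDM is computed from the $k\times k$ minors of $G(D)$ as elements of $\ZZ[D]$, not from their projections: a minor such as $1+pD^{100}$ has degree $100$ over $\ZZ$ and degree $0$ modulo $p$, so $\mathrm{intdeg}(G)$ and $\mathrm{intdeg}(\overline{G})$ can differ, and minimizing one is not known to be equivalent to minimizing the other. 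Your valuation-theoretic reformulation therefore does not connect to the actual hypothesis of the theorem.

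Second, and more seriously, the ``surgical'' construction of $G_2'$ --- removing the $P$-part of $\Delta_p(G_2)$ by a polynomial equivalence over $\ZZ$ while leaving the other irreducible factors untouched --- is exactly the hard content, and you leave it unproved (you say so yourself). The hypothesis $v_P(\Delta_p(G_1))<v_P(\Delta_p(G_2))$ only concerns divisibility of the \emph{projected} minors, and Theorem \ref{lmx} only delivers a row divisible by a lift of $P$ after first multiplying by some power of $p$; the resulting matrix is then not equivalent to $G_2$ over $\ZZ(D)$, so no contradiction with the minimality of $\mathrm{intdeg}(G_2)$ is obtained. The paper sidesteps both issues: it analyzes $\det T$ itself as an element of $\ZZ(D)$, uses Theorem \ref{tech} to split its polynomial part (or the denominator) into a monic polynomial times a unit polynomial, and observes that any monic irreducible factor $P$ of the monic part divides all the $k\times k$ minors of $G_1$ or of $G_2$ \emph{in $\ZZ[D]$} --- a genuinely stronger statement than divisibility of $\Delta_p$. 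Remark \ref{rmx} then yields a unimodular matrix over $\ZZ[D]$ producing a row divisible by $P$ without any multiplication by $p$, and dividing that row by the monic $P$ drops the internal degree by $\deg P$, contradicting the RIDM property unless $\deg P=0$. To rescue your route you would need to prove both the equivalence of the two minimality notions and the removability of a single factor at the top level; as written, neither is available.
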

\begin{proof} Let $G_1(D)$ and $G_2(D)$ be two $k\times n$ matrices satisfying the conditions of the theorem. Then there is a $k\times k$ nonsingular matrix $M(D)$ over $\ZZ(D)$ such that $G_1(D)=M(D)G_2(D)$ which means  
$\Delta_p(G_1)=\det\overline{M}\Delta_p(G_2)$. Then to prove that $\Delta_p(G_1)=\Delta_p(G_2)$ it is sufficient to prove that $ \det\overline{M}$ is a nonzero constant in $\Z_p$. First we consider the particular case where $\det M$ is a  polynomial, which we denote by $Q(D)$. By Theorem \ref{tech},   we can write   $Q=P_1P_2$ where $P_1$ is a monic polynomial and $P_2$ is a unit a polynomial. Let $P$ be a monic irreducible divisor polynomial of $P_1$. We denote by $(\Delta_1, \ldots,\Delta_l,), l=C_n^k,$ the minors of $G_1(D)$. Then the minors of $G_2(D)$ are $(Q\Delta_1, \ldots,Q\Delta_l,)$. \\
As $P$ divides all minors of $G_2(D)$, by Remark \ref{rmx}, there exists an unimodular matrix $T(D)$ such that $T(D)G_2(D)$ has a row divisible by $P$.
By dividing this row by $P$ and as $P$ is a monic irreducible polynomial, we obtain a matrix $G'(D)$ such that   $$\mbox{intdeg}\; G'=\max (\deg Q \Delta_i)-\deg P=\mbox{intdeg}\; G_2-\deg P.$$ Since $G_2$ is RIDM, we have $\mbox{intdeg} G'\geq\mbox{intdeg} G_2.$  
Then we can conclude that  $\deg P=0$. This means  that any monic irreducible divisor of $P_1$ is  constant which implies that $P_1$ is a constant polynomial. Then  $\det M$ is a unit polynomial and then $\det\overline{ M}$ is a nonzero constant in $\Z_p$.\\
For the general case, let suppose that $\det M$ is not polynomial. Without loss of generality we can assume that  $\det M=R(D)=\frac{Q(D)}{P(D)}$ where $P(D)$ is a monic irreducible polynomial. Then the minors of $G_2(D)$ are $\frac{Q(D)}{P(D)}\Delta_i$. As $G_2(D)$ is a polynomial matrix, we conclude that $P$ divides all $\Delta_i$. This means that there exists an unimodular matrix $T(D)$ such that $T(D)G_1(D)$ has a row divisible by $P(D)$. By dividing this row by $P$, we obtain  a matrix $G'(D)$ equivalent to $G(D)$ satisfying $$\mbox{intdeg}\; G'=\max (\deg \Delta_i)-\deg P=\mbox{intdeg}\; G_1-\deg P.$$ Since $\mbox{intdeg}\; G'\geq  \mbox{intdeg}\; G$, we conclude that $P$ is a constant polynomial. Then $\det M$ is polynomial and the result holds from the first part of the proof.

\end{proof} 

\noindent  The following definition is justified by the last theorem.
\begin{definition}\normalfont
Let $\C$ be a free convolutional code over $\ZZ$. if $G(D)$  is   a reduced internal degree encoder for $\C$, then  We define $\Delta_p(\C)$ as   $$\Delta_p(\C)=\Delta_p(G).$$
\end{definition}
This definition is well-defined as $\Delta_p(G)$  is invariant under the choice of a reduced internal degree encoder, as shown in Theorem \ref{imp}.\\
The following theorem provides a characterization of noncatastrophic free codes, drawing upon Theorem \ref{j} and Theorem \ref{j12}.
\begin{theorem}\normalfont
A free code $\C$ over $\ZZ$ is noncatastrophic if and only if $\Delta_p(\C)$ is a power of $D$. Equivalently, this condition means that  $\frac{1}{\Delta_p(\C)}$ is of finite weight.
\end{theorem}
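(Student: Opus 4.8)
The plan is to prove the equivalence by combining the structural characterization of catastrophicity from Theorem \ref{j12} with the invariance of $\Delta_p$ established in Theorem \ref{imp}. First I would fix a reduced internal degree encoder $G(D)$ for the free code $\C$, so that by definition $\Delta_p(\C)=\Delta_p(G)=\Delta(\overline{G})$. The starting point is Theorem \ref{j12}, which tells us that $G(D)$ is catastrophic if and only if its projection $\overline{G(D)}$ over $\Z_p$ is catastrophic. Since $\overline{G(D)}$ is a full row rank polynomial generator matrix over the field $\Z_p$, Theorem \ref{j} applies directly to it: $\overline{G(D)}$ is noncatastrophic if and only if $\Delta(\overline{G})$ is a power of $D$, equivalently $\frac{1}{\Delta(\overline{G})}$ has finite weight.

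Chaining these two equivalences, $G(D)$ is noncatastrophic $\iff$ $\overline{G(D)}$ is noncatastrophic $\iff$ $\Delta(\overline{G})=\Delta_p(G)=\Delta_p(\C)$ is a power of $D$. The remaining issue is to pass from a statement about the particular encoder $G(D)$ to a statement about the code $\C$ itself. Here I would invoke the well-definedness of $\Delta_p(\C)$: by Theorem \ref{imp}, any two equivalent reduced internal degree encoders share the same $\Delta_p$, so the quantity $\Delta_p(\C)$ does not depend on the chosen RIDM encoder $G(D)$. Noncatastrophicity of the code $\C$ means precisely that $\C$ admits \emph{some} noncatastrophic polynomial encoder; I would argue that a reduced internal degree encoder is automatically noncatastrophic whenever the code is, so that testing $G(D)$ suffices to decide noncatastrophicity of $\C$.

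The main obstacle I anticipate is this last point: reconciling ``the code $\C$ is noncatastrophic'' (an existential statement over all encoders) with ``the fixed RIDM encoder $G(D)$ is noncatastrophic'' (a statement about one matrix). The clean way around it is to observe that if $\C$ possessed a noncatastrophic encoder while $G(D)$ were catastrophic, then $\Delta_p(G)$ would have a nontrivial non-$D$ irreducible factor $P(D)$; by Remark \ref{rmx} one could strip this factor off a row via a unimodular transformation and strictly reduce the internal degree, contradicting that $G(D)$ is RIDM. Thus any RIDM encoder of a noncatastrophic code is itself noncatastrophic, and conversely a noncatastrophic RIDM encoder exhibits $\C$ as noncatastrophic. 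Once this is in place, the two displayed equivalences close the argument, and the final ``equivalently'' clause is just the restatement that $\Delta_p(\C)$ being a power of $D$ is the same as $\frac{1}{\Delta_p(\C)}$ having finite weight, exactly as in Theorem \ref{j}.
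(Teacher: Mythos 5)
Your skeleton is the paper's: the theorem is presented there as a direct consequence of Theorem \ref{j12} (a matrix $G$ over $\ZZ$ is catastrophic iff $\overline{G}$ is) and Theorem \ref{j} ($\overline{G}$ is noncatastrophic iff $\Delta(\overline{G})$ is a power of $D$), with Theorem \ref{imp} guaranteeing that $\Delta_p(\C)=\Delta_p(G)$ is well defined for a reduced internal degree encoder $G$. You are also right to isolate the one genuinely nontrivial point, namely reconciling ``the code admits \emph{some} noncatastrophic polynomial encoder'' with ``the fixed RIDM encoder is noncatastrophic''; the paper itself glosses over this.

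However, the argument you propose to close that gap is not valid. First, as written it never uses the hypothesis that $\C$ admits a noncatastrophic encoder: it purports to show that whenever $\Delta_p(G)$ has an irreducible factor other than $D$ one can strictly lower $\mathrm{intdeg}\,G$, which would prove that \emph{every} RIDM encoder is noncatastrophic, hence that every free code is noncatastrophic --- contradicting the paper's own example $G(D)=\begin{pmatrix}1+D & 3+D\end{pmatrix}$ over $\Z_4$, which is an RIDM encoder with $\Delta_2(G)=1+D$ generating a catastrophic code. The precise failure is the appeal to Remark \ref{rmx}: item (3) lets you strip an irreducible $P(D)$ off a row of $M(D)G(D)$ only when $P$ divides all $k\times k$ minors of $G$ \emph{over} $\ZZ[D]$, whereas an irreducible factor of $\Delta_p(G)=\Delta(\overline{G})$ is only guaranteed to divide the minors of the projection $\overline{G}$; item (2) then merely yields a row of $p^iM(D)G(D)$ divisible by a lift of $P$ for some $i$, which neither produces a polynomial matrix equivalent to $G$ nor affects $\mathrm{intdeg}\,G$, since the internal degree is computed from minors over $\ZZ[D]$. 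In the $\Z_4$ example, $1+D$ divides $\overline{3+D}=1+D$ but does not divide $3+D$ in $\Z_4[D]$, and no unimodular transformation repairs this. Consequently the direction ``$\C$ noncatastrophic $\Rightarrow\Delta_p(\C)$ is a power of $D$'' remains unproved in your write-up; a genuine argument is needed, e.g.\ take an arbitrary noncatastrophic polynomial encoder $H=M(D)G(D)$, use $\Delta(\overline{H})=c\,\det\overline{M}\,\Delta(\overline{G})$ with $\Delta(\overline{H})$ a power of $D$, and exclude a non-$D$ irreducible factor of $\Delta(\overline{G})$ by exploiting the minimality of $\mathrm{intdeg}\,G$ as in the proof of Theorem \ref{imp}.
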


\subsection{Construction of minimal $p$-encoder}
The final result of this subsection is a constructive proof of the existence of a minimal $p$-encoder for any given convolutional code $\C$. To this end, we shall decompose $\C$ into simpler free submodules. Hence, we first address the case when $\C$ is a free module and then consider the general case.
\subsubsection{Case of free codes}
Let $\C$ be a free convolutional code over $\ZZ$ with $G(D)$ be an encoder for it. We define the following matrix
$$\widehat{G}(D)=
    \begin{pmatrix}
      G \\
      pG \\
      \vdots \\
      p^{r-1}G\\
    \end{pmatrix}.$$
It is straightforward to see that  $\widehat{G}(D)$ is a $p$-encoder for $\C$ and we have $$\C=span[G(D)]=p\mbox{-}span[\widehat{G}(D)].$$ 
\begin{lemma}\label{00}\normalfont
Let $G(D)$ be an encoder for a free code $\C$. Then, $\widehat{G}(D)$ is noncatastrophic, if and only if, $G(D)$ is noncatastrophic.
\end{lemma}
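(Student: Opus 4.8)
The plan is to set up a weight-preserving dictionary between inputs of $\widehat{G}(D)$ drawn from $\A((D))^{kr}$ and inputs of $G(D)$ drawn from $\ZZ((D))^k$, using the $p$-adic expansion of Laurent series. Writing an input to $\widehat{G}$ as $\vu(D)=(\vu^{(0)}(D),\dots,\vu^{(r-1)}(D))$ with each block $\vu^{(j)}(D)\in\A((D))^k$, the block structure of $\widehat{G}$ gives $\vu(D)\widehat{G}(D)=\sum_{j=0}^{r-1}\vu^{(j)}(D)\,p^jG(D)=\bigl(\sum_{j=0}^{r-1}p^j\vu^{(j)}(D)\bigr)G(D)$. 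So if I set $\vz(D):=\sum_{j=0}^{r-1}p^j\vu^{(j)}(D)\in\ZZ((D))^k$, then $\vu(D)\widehat{G}(D)=\vz(D)G(D)$, and in particular the two outputs have exactly the same weight.

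First I would observe that the assignment $\vu(D)\mapsto\vz(D)$ is a bijection between $\A((D))^{kr}$ and $\ZZ((D))^k$: this is precisely the coefficientwise, coordinatewise $p$-adic expansion of a Laurent series, which exists and is unique because every element of $\ZZ$ has a unique $p$-adic expansion with digits in $\A$. Next I would compare input weights. Since $\vz(D)=\sum_j p^j\vu^{(j)}(D)$, its support is contained in the union of the supports of the $\vu^{(j)}(D)$, giving $\wt[\vz]\le\sum_j\wt[\vu^{(j)}]=\wt[\vu]$; conversely, wherever a coefficient of $\vz(D)$ vanishes all of its $p$-adic digits vanish, so each $\vu^{(j)}(D)$ is supported inside the support of $\vz(D)$ and $\wt[\vu]=\sum_j\wt[\vu^{(j)}]\le r\,\wt[\vz]$. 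Hence $\wt[\vu(D)]<+\infty$ if and only if $\wt[\vz(D)]<+\infty$.

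With this dictionary the two implications are immediate. If $\widehat{G}(D)$ is noncatastrophic and some $\vz(D)\in\ZZ((D))^k$ produces a finite-weight output $\vz(D)G(D)$, then the corresponding $\vu(D)$ satisfies $\wt[\vu(D)\widehat{G}(D)]=\wt[\vz(D)G(D)]<+\infty$, so $\wt[\vu(D)]<+\infty$, whence $\wt[\vz(D)]<+\infty$; this is exactly noncatastrophicity of $G(D)$ (the reverse implication, finite input giving finite output, is automatic for a polynomial matrix). The converse runs identically, starting from an input $\vu(D)\in\A((D))^{kr}$ and passing to $\vz(D)$. The only genuinely non-formal point, and the step I would treat most carefully, is the weight comparison in the second paragraph, specifically the direction that finite weight of $\vz(D)$ forces finite weight of each digit block $\vu^{(j)}(D)$ via the support containment; everything else is bookkeeping on the $p$-adic expansion.
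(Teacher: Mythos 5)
Your proof is correct and complete: the identity $\vu(D)\widehat{G}(D)=\bigl(\sum_{j=0}^{r-1}p^j\vu^{(j)}(D)\bigr)G(D)$, the bijectivity of the coefficientwise $p$-adic expansion between $\A((D))^{kr}$ and $\ZZ((D))^k$, and the two-sided weight comparison $\wt[\vz]\le\wt[\vu]\le r\,\wt[\vz]$ together yield both implications. The paper states this lemma without proof, so there is no argument to compare against; your dictionary is the natural one and fills that gap, with the one step deserving care (finite weight of $\vz$ forcing finite weight of every digit block via support containment and uniqueness of the $p$-adic expansion) handled correctly.
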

\noindent It is  possible now to state the main result of this section.
\begin{theorem}\normalfont\label{tt1}
Any free convolutional code over $\ZZ$ admits a minimal $p$-encoder.
\end{theorem}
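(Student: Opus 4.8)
The plan is to manufacture an explicit $p$-encoder for $\C$ out of the matrices produced by Proposition \ref{cr}, to prove it noncatastrophic by a $p$-adic level-by-level argument, and finally to put it into reduced delay-free form using the $p$-basis machinery of \cite{m,k}. Concretely, fix an encoder $G(D)$ of the free code $\C$ and, for each $i\in\{0,\dots,r-1\}$, let $G_i(D)$ be the matrix supplied by Proposition \ref{cr}, so that $\mbox{span}[p^iG_i]=\mbox{span}[p^iG]$ and, for all $\vu\in\A((D))^k$, $\wt[p^i\vu G_i]<+\infty\Leftrightarrow\wt[\vu]<+\infty$; by Theorem \ref{j12} this last property is equivalent to $\overline{G_i}$ being noncatastrophic over $\Z_p$, and in particular $\overline{G_i}$ has full row rank over $\Z_p(D)$. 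I then stack these blocks into
$$\widehat{G}(D)=\begin{pmatrix}G_0\\ pG_1\\ \vdots\\ p^{r-1}G_{r-1}\end{pmatrix}\in\ZZ[D]^{rk\times n}.$$

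First I would verify that $\widehat{G}$ is a $p$-encoder for $\C$, in parallel with the single-$G$ case already noted in the text, now using the span identities of Proposition \ref{cr}. The $p$-generator sequence property telescopes: since $\mbox{span}[p^{i+1}G_i]=\mbox{span}[p^{i+1}G]=\mbox{span}[p^{i+1}G_{i+1}]$, each row of $p^{i+1}G_i$ lies in the span of the later blocks, which by downward induction form a $p$-generator sequence and hence have $p\mbox{-}span$ equal to their span, while $p\cdot p^{r-1}G_{r-1}=0$. Summing the span identities gives $\mbox{span}[\widehat{G}]=\sum_i\mbox{span}[p^iG]=\mbox{span}[G]=\C$, so $\widehat{G}$ generates $\C$. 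For $p$-linear independence, suppose $\sum_i p^i\alpha_iG_i=0$ with $\alpha_i\in\A[D]^k$ nontrivial, and let $i^\ast$ be the lowest block with $\alpha_{i^\ast}\neq0$; reducing modulo $p^{i^\ast+1}$ yields $\overline{\alpha_{i^\ast}}\,\overline{G_{i^\ast}}=0$ over $\Z_p$, and full row rank of $\overline{G_{i^\ast}}$ forces $\overline{\alpha_{i^\ast}}=0$, hence $\alpha_{i^\ast}=0$, a contradiction.

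The heart of the argument is the noncatastrophicity of $\widehat{G}$. Take an input $\vu=(\vu_0,\dots,\vu_{r-1})\in\A((D))^{rk}$ with $\wt\bigl[\sum_i p^i\vu_iG_i\bigr]<+\infty$, and suppose for contradiction that some $\vu_i$ has infinite weight; let $i_0$ be the smallest such index. The lower blocks $\sum_{i<i_0}p^i\vu_iG_i$ have finite weight (finite-weight series against polynomial matrices), so subtracting them leaves a finite-weight vector $\sum_{i\ge i_0}p^i\vu_iG_i$; reducing this modulo $p^{i_0+1}$ annihilates the blocks with $i>i_0$ and leaves $p^{i_0}\,\overline{\vu_{i_0}}\,\overline{G_{i_0}}$, still of finite weight. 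Since $i_0\le r-1$, this gives $\wt[\overline{\vu_{i_0}}\,\overline{G_{i_0}}]<+\infty$, while $\overline{\vu_{i_0}}$ has infinite weight; this contradicts the noncatastrophicity of $\overline{G_{i_0}}$. Hence every $\vu_i$ has finite weight, and since the converse direction is immediate, $\widehat{G}$ is a noncatastrophic $p$-encoder for $\C$.

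Finally I would pass from $\widehat{G}$ to a minimal $p$-encoder. Invoking the existence of a reduced $p$-basis from \cite{m} together with the delay-free normalization of \cite{k}, I apply $p$-linear row operations (and extraction of powers of $D$) to transform $\widehat{G}$ into a $p$-encoder whose rows form a reduced $p$-basis of $\C$ and whose constant term has $p$-independent rows; being reduced is exactly the requirement that the rows of the leading row coefficient matrix be $p$-linearly independent, so the outcome is a delay-free, reduced, noncatastrophic $p$-encoder, i.e.\ a minimal $p$-encoder. The main obstacle I anticipate is precisely the noncatastrophicity step: because the output $\sum_i p^i\vu_iG_i$ entangles all $p$-adic levels and employs a different matrix $G_i$ at each level, isolating the contribution of a single $\vu_{i_0}$ relies on exactly the combination of span-preservation and per-level noncatastrophicity that Proposition \ref{cr} was built to provide; a secondary point that must be checked with care is that the concluding reduction to reduced delay-free form preserves noncatastrophicity.
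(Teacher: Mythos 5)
Your architecture is the same as the paper's: stack the matrices $G_i(D)$ supplied by Proposition \ref{cr} into $\widehat{G}(D)$, argue noncatastrophicity level by level, then row-reduce to a reduced, delay-free form. The $p$-encoder verification and the final reduction step are fine (the paper is in fact terser there than you are). But the step you yourself single out as the heart of the argument contains a genuine gap. You assert that property (\ref{tt}) of Proposition \ref{cr} ``is equivalent to $\overline{G_i}$ being noncatastrophic over $\Z_p$''; this is false, and only the implication you do \emph{not} need holds. Property (\ref{tt}) measures the weight of $p^i\vu(D)G_i(D)$ in $\ZZ$, i.e.\ it sees the output modulo $p^{r-i}$, whereas noncatastrophicity of $\overline{G_i}$ sees it only modulo $p$; the former is therefore the weaker hypothesis. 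When the free code $\C$ is catastrophic in the classical sense --- e.g.\ the code over $\Z_4$ generated by $(1+D\;\;3+D)$ --- \emph{every} encoder has a catastrophic projection by Theorem \ref{j12}, so $\overline{G_0}$ is catastrophic and an infinite-weight $\vu_0\in\A((D))^k$ with $\wt[\overline{\vu_0}\,\overline{G_0}]<+\infty$ genuinely exists (take $\vu_0=1+D+D^2+\cdots$, the $\A$-lift of $1/(1+D)$). Hence the conclusion you reach after reducing modulo $p^{i_0+1}$, namely $\wt[\overline{\vu_{i_0}}\,\overline{G_{i_0}}]<+\infty$ with $\overline{\vu_{i_0}}$ of infinite weight, is not a contradiction. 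The reduction has discarded exactly the higher $p$-adic part of the level-$i_0$ output, which is where the infinite weight resurfaces: in the example, $\vu_0\,(1+D\;\;3+D)=(1+2D+2D^2+\cdots,\;3)$ has infinite weight residing entirely in its $p$-multiple part, and that part cannot be cancelled by the later blocks without reintroducing infinite weight there.

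This is precisely why the paper's proof is more laborious at this point: after obtaining $\wt[\vu_0\overline{G}]<+\infty$ it does not stop at the mod-$p$ level, but invokes Theorem \ref{px} to pin down the rational shape of $\vu_0$ (a component of the form $Q(D)/(D^mP(D))$ with $P(D)$ an irreducible divisor of $\Delta_p(G)$ coprime to $Q(D)$) and then tracks the \emph{full} first component of $\sum_i p^i\vu_i(D)G_i(D)$ through all the levels, concluding that $P(D)$ would have to divide a row of $G(D)$ and hence all its $k\times k$ minors --- impossible for the common-divisor-free $G_0$ of Proposition \ref{cr}. To repair your proof you must replace the appeal to ``noncatastrophicity of $\overline{G_{i_0}}$'' by an argument of this kind, one that exploits the interaction between the block at level $i_0$ and the blocks above it rather than the mod-$p^{i_0+1}$ projection alone. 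A minor further point: the paper chooses $G(D)$ to be a reduced internal degree encoder, so that $G_0=G$ and the minors of $G$ have no common irreducible factor; with an arbitrary encoder, as in your setup, even that last ingredient of the repaired argument is not available.
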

\begin{proof}\normalfont
Let $\C$ be a free convolutional code  over $\ZZ$ and $G(D)\in\ZZ[D]^{k\times n}$ be a reduced internal degree   encoder of $\C$. Then we have $\Delta_p(G)=\Delta_p(\C)$. Let for $i=0,\ldots,r-1$, $G_i(D)$ the matrix deduced from $G(D)$ as describe in the Proposition \ref{cr}. As $G(D)$ is taken as  RIDM, then $G_0(D)=G(D)$.  Clearly the matrix  $G'(D)=\left(
                                                                         \begin{array}{c}
                                                                           G \\
                                                                           pG_1 \\
                                                                           \vdots \\
                                                                           p^{r-1}G_{r-1} \\
                                                                         \end{array}
                                                                       \right)$ is a $p$-encoder for $\C$. \\
Let $\vu(D)=(\vu_0(D),\dots,\vu_{r-1}(D))\in\A((D))^{kr}$ be an input sequence  which produces a finite weight output via $G'(D)$. This means that 
\begin{equation}\label{g}
\mbox{wt}[\vu(D)G'(D)]=\mbox{wt}[\sum_{i=0}^{r-1}\vu_i(D)p^iG_i(D)]<+\infty.
 \end{equation}
The equation \ref{g}, gives that $$\mbox{wt}[\overline{\vu(D)G'(D)}]= \mbox{wt}[\overline{\vu_0(D)}\overline{G_0(D)}]=\mbox{wt}[\vu_0(D)\overline{G(D)}]<+\infty.$$ 
Let suppose that $\mbox{wt}[\vu_0(D)]=+\infty$. Without loss of generality, we can suppose that the projection of $G(D)$ in $\Z_p$ admits a row divisible by $\Delta_p(G)$, (if not we replace $G$ by $MG$, where $M(D)\in\Z_p(D)^{k\times k}$, $\det M(D)=cte$ and   $M(D)G_0(D)$ has a row divisible by  $\Delta_p(G_0)$). Then, by Theorem \ref{px}, the first component of $\vu_0(D)$ is equal to $\frac{Q(D)}{D^mP(D)}$ where $P(D)$ is an irreducible divisor of $\Delta_p(G)$ and, $gcd(P,Q)=1$.\\ Let denote by $g(D)$ the first row of $G(D)$. The first component of $p\vu_1(D)G_1(D)$ is given by $pP_1(D)\vu_1^1(D)g(D)$, where $\vu_1^1$ is the first component of $\vu_1$ and $P_1$ is an irreducible  divisor of $\Delta_p(G_0)$. Similarly, the first component of $\vu_i(D)G_i(D)$ is $p^i\vu^1_i(D)P_1\ldots P_i$, where the polynomials $P_j$ are irreducibles divisors of $\Delta_p(G)$. Then, the first component of $\vu(D)G'(D)$ has the following form : $$\vu_1^1(D)G'(D)=\frac{Q(D)+pR(D)}{P(D)}g(D),$$ where $R(D)$ is a polynomial over $\ZZ$.
As $\mbox{wt}[\vu_1^1(D)G'(D)]<+\infty$ and $P(D)$ is not a divisor of $Q(D)+pR(D)$ we conclude that $P(D)$ divides the first row $g(D)$ of $G(D)$ and then divides all minors of $G(D)$ which is impossible. Then we have $\mbox{wt}[\vu_0(D)]<+\infty$.\\ It follows from equation \ref{g} that $$\mbox{wt}[p\vu_1(D)G_1(D)+\ldots+p^{r-1}\vu_{r-1}(D)G_{r-1}(D)]<+\infty,$$
which implies that $$\mbox{wt}[\vu_1(D)G_1(D)+\ldots+p^{r-2}\vu_{r-1}(D)G_{r-1}(D)]<+\infty.$$ We repeat the same reasoning as above, we conclude that $\mbox{wt}[\vu(D)]<+\infty.$  This proves that $G'(D)$ is noncatastrphic when the input sequence is in $\A((D))^{kr}$. Finally, we apply the row reduction algorithm of \cite{m} on the rows of $G'(D)$, we obtain a matrix $\widetilde{G}(D)$ whos rows form a reduced p-basis and then is a minimal p-encoder for $\C$.

\end{proof}
\begin{example}\normalfont
Let  $\C$ be the free convolutional code over  $\Z_{16}$  generated by the matrix $$G(D)=\left(
                                                                            \begin{array}{cccc}
                                                                             1+2D^2 & 1+D & 1+D & 1+D^2 \\
                                                                              D & 1+D & 15+3D & 2D^2 \\
                                                                            \end{array}
                                                                          \right).$$
  Clearly $G(D)$ is a reduced internal degree encoder for $\C$, then   $\Delta_2(\C)=\Delta_2(G)$.  By a direct calculus, We have $\Delta_2(G)=(1+D)^2$. Then, the code $\C$ is a catastrophic code. We will looking for the matrices $G_i$ defined above.  There is no irreducible polynomial as common divisor of the minors of $G(D)$, then we have $G_0=G$.  For the same reason,  we have $G_1(D)=G(D)$. For $i=2$,
 $$2^2G(D)=\left(
 \begin{array}{cccc}
 4+8D^2 & 4+4D & 4+4D & 4+4D^2 \\
 4D & 4+4D & 12+12D & 8D^2 \\
 \end{array}
 \right).$$ Let $M(D)=
 \left(
  \begin{array}{cc}
   1 & 1 \\
   1 & 0 \\
   \end{array}
    \right)$, we have
\begin{eqnarray*}
    2^2M(D)G(D)&=&\left(
 \begin{array}{cccc}
 4+4D+8D^2 & 8+8D & 0 & 4+12D^2 \\
 4D & 4+4D & 12+12D & 8D^2 \\
 \end{array}
 \right)\\
 &=&2^2\left(
 \begin{array}{cccc}
 (15+D)(3+2D) & 2(15+D) & 0 & 3(15+D)(1+D) \\
 D & 1+D & 3+3D & 2D^2 \\
 \end{array}
 \right)
 \end{eqnarray*}
where  the first  row  is divisible by $15+D$ . Then, we take  $G_2(D)$  the matrix obtained by dividing this row  by $15+D$. We have
$$2^2G_2(D)=2^2\left(
 \begin{array}{cccc}
 3+2D & 2 & 0 & 3(1+D) \\
 D & 1+D & 3(1+D) & 2D^2 \\
 \end{array}\right).$$

\noindent  Finally, for $i=3$,
$$2^3G_2(D)=\left(
 \begin{array}{cccc}
 8 & 0 & 0 & 8(1+D) \\
 8D & 8(1+D) & 8(1+D) & 0 \\
 \end{array}
 \right).$$
 We multiply $2^3G_2$ by the matrix $M(D)=\left(
                                          \begin{array}{cc}
                                            1 & 0 \\
                                            1 & 1 \\
                                          \end{array}
                                        \right)$, we have $$2^3M(D)G_2(D)=2^3\left(
 \begin{array}{cccc}
 1 & 0 & 0 & 1+D \\
 1+D & 1+D & 1+D & 1+D \\
 \end{array}
 \right).$$
 Then, by dividing the second row by $1+D$, we obtain the matrix $G_3(D)$ satisfying   $2^3G_3(D)=2^3\left(\begin{array}{cccc}
 1 & 0 & 0 & 1+D \\
 1 & 1 & 1 & 1 \\
 \end{array}
 \right)$ and  $span[2^3G_3(D)]=span[2^3G(D)]$.\\

\noindent We consider the following matrix 
$$G'(D)=\left(
                     \begin{array}{c}
                       G(D) \\
                       2G(D) \\
                       4G_2(D) \\
                       8G_3(D) \\
                     \end{array}
                   \right)=
\left(
  \begin{array}{cccc}
   1+2D^2 & 1+D & 1+D & 1+D^2 \\
    D & 1+D & 15+3D & 2D^2 \\
    2+4D^2 & 2+2D & 2+2D & 2+2D^2 \\
 2D & 2+2D & 14+6D & 4D^2 \\
 12+8D & 8 & 0 & 12(1+D) \\
 4D & 4(1+D) & 12(1+D) & 8D^2 \\
   8 & 0 & 0 & 8(1+D) \\
   8 & 8& 8 & 8 \\
  \end{array}
\right)$$
which is  a noncatastrophic $p$-encoder of $\C$.
The leading row coefficient matrix of $G'(D)$ is 
$$\begin{pmatrix}
2 & 0 & 0 & 1 \\
    0 & 0 & 0 & 2 \\
    4 & 0 & 0 & 2 \\
 0 & 0 & 0 & 4 \\
 8 & 0 & 0 & 12 \\
 0 & 0 & 0 & 8 \\
  0  & 0 & 0 & 8 \\
   8 & 8& 8 & 8 
 \end{pmatrix}
$$
By adding $D$ times the seven row to the sixth row, we obtain the matrix $ \widetilde{G}(D)$ given by $$\widetilde{G}(D)=
\begin{pmatrix}
1+2D^2 & 1+D & 1+D & 1+D^2 \\
    D & 1+D & 15+3D & 2D^2 \\
    2+4D^2 & 2+2D & 2+2D & 2+2D^2 \\
 2D & 2+2D & 14+6D & 4D^2 \\
 12+8D & 8 & 0 & 12(1+D) \\
 12D & 4(1+D) & 12(1+D) & 8D \\
   8 & 0 & 0 & 8(1+D) \\
   8 & 8& 8 & 8
 \end{pmatrix}$$
 whose rows are a reduced $p$-basis for $\C$. Finally, $ \widetilde{G}(D)$ is a minimal $p$-encoder for $\C$.
\end{example}

\subsection{Generalisation to non free codes}
\noindent The generalization of Theorem \ref{tt1} to arbitrary   code over $\ZZ$ is achieved  by decomposing such a code into simpler components.
 This decomposition was proved for block codes over $\ZZ$ in \cite{e} and extended  to convolutional codes in \cite{en}.
\begin{theorem}\label{t01}\cite{en}\normalfont
Let $\C$ be a convolutional code over $\ZZ$. Then, there exist free convolutional codes over $\ZZ$ $\C_0,\C_1,\dots,\C_{r-1}$ such that
$$\C=\C_0\oplus p\C_1\oplus\dots\oplus p^{r-1}\C_{r-1}$$ and $$\C_0+\dots+\C_{r-1}=\C_0\oplus\dots\oplus\C_{r-1}.$$
Note that the rank of the free codes $\C_i$ are a property of the code $\C$.
\end{theorem}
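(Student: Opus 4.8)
The plan is to recognize the asserted decomposition as the structure theorem for finitely generated modules over the coefficient ring $R:=\ZZ((D))$, once one observes that $R$ is a commutative \emph{chain ring}. The crucial and perhaps least obvious point is the following: reduction of coefficients modulo $p$ gives a surjective ring homomorphism $R\twoheadrightarrow\Z_p((D))$ onto the field $\Z_p((D))$, whose kernel $pR$ is nilpotent (since $p^r=0$). Hence $R$ is local with maximal ideal $(p)$, an element $f\in R$ is a unit precisely when $\overline{f}\neq 0$ in $\Z_p((D))$, and a short argument shows that the only ideals of $R$ form the chain $(1)\supset(p)\supset\cdots\supset(p^{r-1})\supset(p^r)=(0)$: for a nonzero ideal $I$, let $j$ be maximal with $I\subseteq(p^j)$ and pick $f\in I\setminus(p^{j+1})$; then $f=p^j g$ with $g$ a unit, so $(p^j)=(f)\subseteq I$ and $I=(p^j)$. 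Thus $R$ is a local principal ideal ring whose divisibility is governed entirely by the powers of $p$.

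Over such a chain ring every matrix admits a Smith normal form. Let $G(D)\in\ZZ[D]^{k\times n}$ be a polynomial generator matrix for $\C$, so that $\C$ is the $R$-row space of $G$ and is finitely generated. I would run the usual pivoting algorithm: among all entries pick one of minimal $p$-valuation; since the ideals of $R$ are totally ordered this entry divides every other entry, so after permutations and $R$-elementary operations its row and column can be cleared while all remaining entries keep valuation at least that of the pivot. Recursing on the complementary block yields invertible $U\in\mathrm{GL}_k(R)$ and $V\in\mathrm{GL}_n(R)$ with $UGV=\Lambda$, where $\Lambda$ is diagonal with entries $p^{d_1},\dots,p^{d_s}$ and $0\le d_1\le\cdots\le d_s\le r-1$ (entries lying in $(p^r)=0$ are discarded). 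Left multiplication by $U$ does not change the row space, while right multiplication by $V$ replaces $\C$ by $\C V$; the decomposition built for $\C V$ is then transported back through the $R$-automorphism $V^{-1}$ of $R^n$.

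In these standard coordinates the decomposition is immediate. For each $j\in\{0,\dots,r-1\}$ put $S_j=\{i:d_i=j\}$, $k_j=|S_j|$, and let $F_j$ be the free submodule of $R^n$ spanned by the standard basis vectors $\{e_i:i\in S_j\}$. Then the row space of $\Lambda$ equals $\bigoplus_j p^j F_j$, while $\bigoplus_j F_j$ is the free module on $\{e_i:i\le s\}$; both sums are direct because distinct standard basis vectors are $R$-linearly independent and the index sets $S_j$ are disjoint. Setting $\C_j:=F_j V^{-1}$ yields free convolutional codes with $\C=\C_0\oplus p\C_1\oplus\cdots\oplus p^{r-1}\C_{r-1}$ and $\C_0+\cdots+\C_{r-1}=\C_0\oplus\cdots\oplus\C_{r-1}$, since an $R$-linear isomorphism preserves freeness and directness and commutes with the scalars $p^j$. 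For the invariance of the ranks, note that $p^j\C_j\cong(R/p^{r-j}R)^{k_j}$, whence $\C\cong\bigoplus_{j=0}^{r-1}(R/p^{r-j}R)^{k_j}$ as $R$-modules; as the cyclic summands $R/p^{r-j}R$ have pairwise distinct lengths, the multiplicities $k_j=\mathrm{rank}\,\C_j$ are determined by $\C$ by the uniqueness part of the structure theorem over the principal ideal ring $R$.

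The main obstacle is entirely in the first step: verifying that inverting $D$ turns the awkward ring $\ZZ[D]$ into the well-behaved chain ring $\ZZ((D))$, local with maximal ideal $(p)$ and residue field $\Z_p((D))$. (By contrast $\ZZ[D]$ itself is neither local nor a chain ring, which is exactly why the theory is set over Laurent series.) Once the chain-ring structure is established, the Smith normal form and the grouping by the exponents $d_i$ are routine, and the only remaining verification is that the two direct-sum conditions and the freeness of the $\C_j$ survive transport by $V^{-1}$ — which is automatic for an $R$-module automorphism of $R^n$.
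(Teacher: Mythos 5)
The paper does not actually prove this statement --- it is imported from \cite{en} --- so I am judging your argument on its own terms and against the argument that reference is known to use (reduction of a generator matrix to a standard form over the chain ring). Your overall strategy is the right one: $\ZZ((D))$ is indeed local with nilpotent maximal ideal $(p)$ and residue field $\Z_p((D))$, your verification that its ideals form the chain $(1)\supset(p)\supset\cdots\supset(0)$ is correct, the pivoting argument for a Smith form $UGV=\Lambda$ with diagonal entries $p^{d_i}$ goes through because divisibility is total, and the grouping by exponents together with the length/multiplicity count for the invariance of the ranks is sound.

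There is, however, one genuine gap. The theorem asserts that the $\C_j$ are free \emph{convolutional codes}, and in this paper (as in \cite{k,en}) a convolutional code is a submodule of $\ZZ((D))^n$ admitting a \emph{polynomial} generator matrix --- this is a real restriction, since a finitely generated $\ZZ((D))$-submodule generated by arbitrary Laurent series need not have one (already over a field, the span of $(1,f(D))$ has a polynomial generator only when $f$ is rational). Your $\C_j=F_jV^{-1}$ are generated by rows of $V^{-1}$, and you only place $V$ in $\mathrm{GL}_n(\ZZ((D)))$, so as written nothing guarantees the $\C_j$ are convolutional codes; this also matters downstream, where the paper stacks polynomial encoders $G_i(D)$ of the $\C_i$. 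The gap is repairable: starting from a polynomial $G(D)$, every division performed by the pivoting algorithm is by a regular entry, and every regular element of $\ZZ[D]$ (indeed of $\ZZ(D)$) is invertible in $\ZZ(D)$ --- write $f=h(1+h^{-1}g)$ with $h$ the tail starting at the first unit coefficient and $g\in p\ZZ[D]$, so that $h^{-1}g$ is nilpotent --- hence $U$ and $V$ may be taken in $\mathrm{GL}(\ZZ(D))$. Each row of $V^{-1}$ is then a rational vector, and multiplying it by the product of its denominators (a polynomial whose lowest-degree coefficient is a unit, hence a unit of $\ZZ((D))$) yields a polynomial generator of the same cyclic module. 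You should add this step explicitly; without it the objects you construct are only free $\ZZ((D))$-submodules, not codes in the sense required by the statement.
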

\noindent Let $\C=\C_0\oplus p\C_1\oplus\dots\oplus p^{r-1}\C_{r-1}$ and  $G_i(D)$ be an encoder for the free code $\C_i$, $i=0\dots r-1$. It's clear that the  matrix $$G(D)=\left(
                                                 \begin{array}{c}
                                                   G_0 \\
                                                   pG_1 \\
                                                   \vdots \\
                                                   p^{r-1}G_{r-1}\\
                                                 \end{array}
                                               \right)$$
is a generator matrix for $\C$. Then, the following matrix
$$\widehat{G}(D)=\left(
                     \begin{array}{c}
                       G_0 \\
                       p\left(
                          \begin{array}{c}
                            G_0 \\
                            G_1 \\
                          \end{array}
                        \right)
                        \\
                       \vdots\\
                       p^{r-1}\left(
                                \begin{array}{c}
                                  G_0 \\
                                  \vdots \\
                                  G_{r-1}\\
                                \end{array}
                              \right)
                        \\
                     \end{array}
                   \right)$$ is a $p$-encoder of $\C$.
Let $\C^i$ be the free code $\C_0\oplus\dots\oplus\C_i$. Then the matrix $$G^i=
                   \begin{pmatrix}
                     G_0 \\
                     \vdots \\
                     G_i \\
                   \end{pmatrix}$$
is an encoder for $\C^i$ and the matrix $\widehat{G}(D)$ has the following form :
$$
\widehat{G}(D)=
\begin{pmatrix}
G^0\\ pG^1\\ \vdots \\p^{r-1}G^{r-1}

\end{pmatrix}.
$$
By replacing each matrix $G^i$ by the corresponding matrix $H^i$ discribed in Proposition \ref{cr} we obtain the the following matrix 
$$\widetilde{G}(D)=\begin{pmatrix}
H^0\\ pH^1\\ \vdots \\ p^{r-1}H^{r-1}
\end{pmatrix}
.$$ Finally, we apply the row reduction algorithm on this matrix, the result is a minimal p-encoder for $\C$.
 Therefore, we are able to construct a minimal $p$-encoder for any code $\C$.
\begin{theorem}\normalfont
Every convolutional code over $\ZZ$ admits a minimal $p$-encoder.
\end{theorem}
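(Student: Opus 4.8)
The plan is to reduce the general (non-free) case entirely to the already-established free case of Theorem \ref{tt1}, using the decomposition supplied by Theorem \ref{t01}. By that theorem, any convolutional code $\C$ over $\ZZ$ splits as $\C=\C_0\oplus p\C_1\oplus\dots\oplus p^{r-1}\C_{r-1}$, where each $\C_i$ is a free code whose rank is an invariant of $\C$. First I would fix an encoder $G_i(D)$ for each free component $\C_i$ and form the generator matrix $G(D)$ obtained by stacking $G_0,pG_1,\dots,p^{r-1}G_{r-1}$; a routine check shows this generates $\C$. The key construction is then to pass to the associated $p$-encoder $\widehat{G}(D)$, whose $p^i$-scaled block is the stack of $G_0,\dots,G_i$, i.e.\ an encoder $G^i(D)$ for the intermediate free code $\C^i=\C_0\oplus\dots\oplus\C_i$. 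The point of this reorganisation is that it exposes a nested family of \emph{free} codes $\C^0\subseteq\C^1\subseteq\dots\subseteq\C^{r-1}$, to which the free-case machinery applies block-by-block.

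Second, I would invoke Proposition \ref{cr} on each encoder $G^i(D)$ to replace it by the matrix $H^i(D)$ satisfying $\mbox{span}[p^iH^i(D)]=\mbox{span}[p^iG^i(D)]$ together with the noncatastrophicity property \eqref{tt}, namely that for every input $\vu(D)\in\A((D))^{\ast}$ one has $\mbox{wt}[p^i\vu(D)H^i(D)]<+\infty$ if and only if $\mbox{wt}[\vu(D)]<+\infty$. Substituting these blocks yields the $p$-encoder $\widetilde{G}(D)$ built from $H^0,pH^1,\dots,p^{r-1}H^{r-1}$. The span condition guarantees that $\widetilde{G}(D)$ still $p$-generates $\C$, and the per-block property \eqref{tt} is exactly what drives the noncatastrophicity argument of Theorem \ref{tt1}: one peels off the $i=0$ layer by projecting modulo $p$, forces the corresponding input block to have finite weight by the divisibility/$\Delta_p$ argument of Theorem \ref{px} and Proposition \ref{cr}, then descends to the $p$-scaled residual code and repeats. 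Finally, applying the row reduction algorithm of \cite{m} to $\widetilde{G}(D)$ produces rows forming a reduced $p$-basis, and a delay-free noncatastrophic $p$-encoder whose $G^{lrc}$ rows are $p$-linearly independent is by definition a minimal $p$-encoder for $\C$.

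I expect the main obstacle to be the noncatastrophicity verification across the layered structure rather than the construction itself. In the free case of Theorem \ref{tt1} the inductive peeling argument was carried out for a single free code with one family of $G_i$'s; here each layer is governed by a \emph{different} encoder $H^i$ for a \emph{different} intermediate code $\C^i$, so one must check that the irreducible-divisor obstruction (``$P(D)$ divides a row, hence divides all minors, contradiction'') still applies uniformly when the layers interact. The care needed is that the finite-weight conclusion $\mbox{wt}[\vu_0(D)]<+\infty$ at the top layer relies on $H^0=G_0$ having minors with no common non-$D$ irreducible factor, and that after removing this layer the residual sum $p\vu_1 H^1+\dots+p^{r-1}\vu_{r-1}H^{r-1}$ inherits finite weight so the same reasoning descends; this is precisely where property \eqref{tt} of Proposition \ref{cr}, established separately for each $H^i$, must be marshalled layer by layer. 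Once this descent is secured, the reduction to reduced $p$-basis via \cite{m} is routine and the result follows.
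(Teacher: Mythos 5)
Your proposal follows essentially the same route as the paper: decompose $\C$ via Theorem \ref{t01}, reorganise the stacked generator matrix into the nested free codes $\C^i=\C_0\oplus\dots\oplus\C_i$ with encoders $G^i$, replace each $G^i$ by the matrix $H^i$ from Proposition \ref{cr}, verify noncatastrophicity by the same layer-peeling argument as in Theorem \ref{tt1}, and finish with the row reduction algorithm of \cite{m}. The construction and the points you flag as needing care (the per-layer use of property (\ref{tt}) with distinct encoders $H^i$) coincide with the paper's argument, so no substantive difference to report.
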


\begin{example}\normalfont
Let's consider the ring $\Z_9$ and the code $\C=\C_0\oplus 3\C_1$ where $\C_0$ is the free code generated by $G_0=\left(
                                                                                          \begin{array}{cccc}
                                                                                            3D+3 & 5+D & 5+7D & 8+D \\
                                                                                            5+6D & 8+3D & 1+5D & 6+D \\
                                                                                          \end{array}
                                                                                        \right)$, and $\C_1$ is the free code generated by $G_1=\left(
                                                                                                                                 \begin{array}{cccc}
                                                                                                                                   1+4D & 4+7D & 7+D & 4+D \\
                                                                                                                                 \end{array}
                                                                                                                               \right).$
The matrix $$G(D)=\left(
                \begin{array}{c}
                  G_0 \\
                  3G_1 \\
                \end{array}
              \right)=\left(
                        \begin{array}{cccc}
                         3+3D & 5+D & 5+7D & 8+D \\
                        5+6D & 8+3D & 1+5D & 6+D \\
                          3+3D & 3+3D & 3+3D & 3+3D \\
                        \end{array}
                      \right)$$ is a generator matrix for $\C$.  A $p$-encoder for $\C$ is given by $$G'(D)=\left(
                                                                                                                    \begin{array}{c}
                                                                                                                      G_0 \\
                                                                                                                      3\left(
                                                                                                                         \begin{array}{c}
                                                                                                                           G_0 \\
                                                                                                                           G_1 \\
                                                                                                                         \end{array}
                                                                                                                       \right)
                                                                                                                       \\
                                                                                                                    \end{array}
                                                                                                                  \right).$$
There is no common divisor for the minors of $G_0$, then we keep this matrix.\\
Let $G^1(D)=\left(
            \begin{array}{c}
              G_0 \\
              G_1 \\
            \end{array}
          \right)=\left(
                    \begin{array}{cccc}
                      3+3D & 5+D & 5+7D & 8+D \\
                      5+6D & 8+3D & 1+5D & 6+D \\
                      1+4D & 4+7D & 7+D & 4+D \\
                    \end{array}
                  \right)$. We will look for a noncatastrophic matrix $H_1(D)$ such that $span[3H_1(D)]=span[3G^1(D)]$. We have $\Delta_3(G^1)=(1+D)^2(2+D)$,
then there exists a matrix $M(D)\in\Z_3(D)^{3\times 3}$, $\det M=1$, and where  $3M(D)G^1(D)$ has a row divisible by $\Delta_3(G^1)$. \\ Let $M(D)=\left(
                                              \begin{array}{ccc}
                                                \frac{1}{2+D} & 0 & 0 \\
                                                0 & \frac{1}{1+D} & \frac{1}{(1+D)^2} \\
                                                0 & 0 & (1+D)(2+D) \\
                                              \end{array}
                                            \right)$,
we have $$3M(D)G^1(D)=\left(
\begin{array}{cccc}
 0 & 3 & 3 & 3 \\
 0 & 0 & 6 & 3 \\
 3(1+D)^2(2+D) & 3(1+D)^2(2+D) & 3(1+D)^2(2+D) & 3(1+D)^2(2+D) \\
 \end{array}
 \right).$$ Let $3H_1(D)$ be the matrix obtained from $3M(D)G^1(D)$ by dividing its last row by $\Delta_3(G^1)$. The matrix $H_1(D)$ is then  noncatastrophic, and since $G_0(D)$ is  so also, the following  matrix $$\widetilde{G}(D)=\left(
                                      \begin{array}{c}
                                        G_0 \\
                                        3H_1 \\
                                      \end{array}
                                    \right)= \left(
                                               \begin{array}{cccc}
                                                 3D+3 & 5+D & 5+7D & 8+D \\
                                                 5+6D & 8+3D & 1+5D & 6+D \\
                                                 0 & 3 & 3 & 3 \\
                                                 0 & 0 & 6 & 3 \\
                                                 3 & 3 & 3 & 3 \\
                                               \end{array}
                                             \right)$$ is a noncatastrophic $p$-encoder for $\C$. Finally, the rows of the leading row coefficient matrix of $\widetilde{G}(D)$ are $p$-linearly independent, hence $\widetilde{G}(D)$ is a minimal $p$-encoder for $\C$.

\end{example}
\section{Conclusion}
\noindent \section{Conclusion}

The study of convolutional codes over finite rings presents many challenges compared to their counterparts over fields. A significant distinction is the existence of catastrophic codes over rings, where any polynomial  generator matrix can lead to an infinite weight input mapping to a finite weight output. This fundamental difference motivated the introduction of the novel concept of a "$p$-encoder" in \cite{k}, aiming to provide a noncatastrophic representation for any code over $\mathbb{Z}_{p^r}$. The authors in \cite{k} conjectured that every convolutional code over $\mathbb{Z}_{p^r}$ admits a minimal $p$-encoder, which implies that all such codes are noncatastrophic when the coefficients of the input series are in $\A$.

To address this conjecture, our work introduced a new polynomial, $\Delta_p(\mathcal{C})$, specifically tailored to characterize free codes over $\mathbb{Z}_{p^r}$. This allowed us to establish a necessary and sufficient condition for a free code to be catastrophic. Furthermore, we investigated the crucial relationship between infinite weight inputs (with coefficients in $\A$) producing finite weight outputs, demonstrating how the entries of such inputs are directly related to the polynomial $\Delta_p(\mathcal{C})$ and its divisors. These insights provided the foundation for our primary contribution: the systematic construction of minimal $p$-encoders, initially for free codes, and subsequently generalized to encompass all convolutional codes over $\mathbb{Z}_{p^r}$. This work thereby resolves the conjecture from \cite{k}, affirming the existence of a minimal $p$-encoder for every convolutional code over $\mathbb{Z}_{p^r}$.

\end{document}